\newtheorem{thm}{Theorem}
\newtheorem{lemma}[thm]{Lemma}
\newcommand{\dd}{\mathrm{d}}
\newcommand{\UHP}{\mathbb{H}}
\newcommand{\Hilbert}{\mathcal{H}}
\newcommand{\cmpx}{\mathbb{C}}
\newcommand{\schottky}{\mathfrak{S}}
\newcommand{\linebundle}{\mathscr{L}}
\newcommand{\PSLC}{\operatorname{PSL}(2,\mathbb{C})}
\newcommand{\Gpotential}{\mathscr{S}}
\newcommand{\Lponetial}{\mathsf{h}}
\newcommand{\SchottkyFund}{\mathcal{D}}
\newcommand{\sing}{\operatorname{Sing}}
\newcommand{\orderrange}{\hat{\mathbb{N}}^{^{>1}}}
\newcommand{\brdiv}{\mathscr{D}}
\newcommand{\singrigon}{\overset{{}_{\curlywedge}}{\Omega}}
\newcommand{\singfund}{\overset{{}_{\curlywedge}}{\mathcal{D}}}
\newcommand\smallO{
	\mathchoice
	{{\scriptstyle\mathcal{O}}}% \displaystyle
	{{\scriptstyle\mathcal{O}}}% \textstyle
	{{\scriptscriptstyle\mathcal{O}}}% \scriptstyle
	{\scalebox{.7}{$\scriptscriptstyle\mathcal{O}$}}%\scriptscriptstyle
}
\newcommand{\stks}[1]{
	\left< #1 \right>
}
\newcommand{\compfontuii}[1]{\bm{\mathsf{#1}}^{\bullet , \bullet}}
\newcommand{\compfontui}[1]{\bm{\mathsf{#1}}^{\bullet}}
\newcommand{\compfontdii}[1]{\bm{\mathsf{#1}}_{\bullet , \bullet}}
\newcommand{\compfontdi}[1]{\bm{\mathsf{#1}}_{\bullet}}
\newcommand{\compfont}[1]{\bm{\mathsf{#1}}}
\title{\boldmath 
	Renormalized Volume, Polyakov Anomaly, and Orbifold Riemann Surfaces}
\author[a,b]{Hossein Mohammadi}
\author[c]{,Ali Naseh}
\author[c]{,Behrad Taghavi}
\affiliation[a]{Department of Physics, Sharif University of Technology, P.O. Box 11155-9161, Tehran, Iran}
\affiliation[b]{
	Research Center for High Energy Physics
	Department of Physics, Sharif University of Technology,
	P.O.Box 11155-9161, Tehran, Iran
}
\affiliation[c]{School of Particles and Accelerators, Institute for Research in Fundamental Sciences (IPM),
	P.O. Box 19395-5531, Tehran, Iran}
\emailAdd{hossein\_mohammadi@physics.sharif.edu}
\emailAdd{naseh,btaghavi@ipm.ir}
\abstract{In \cite{Taghavi2024classical}, two of the authors studied the function $\Gpotential_{\boldsymbol{m}} = S_{\boldsymbol{m}} - \pi \sum_{i=1}^n (m_i - \tfrac{1}{m_i}) \log \Lponetial_{i}$ for orbifold Riemann surfaces of signature $(g;m_1,...,m_{n_e};n_p)$ on the generalized Schottky space $\schottky_{g,n}(\boldsymbol{m})$. In this paper, we prove the holographic duality between $\Gpotential_{\boldsymbol{m}}$ and the renormalized hyperbolic volume $V_{\text{ren}}$ of the corresponding Schottky 3-orbifolds with lines of conical singularity that reach the conformal boundary. In case of the classical Liouville action on $\schottky_{g}$ and $\schottky_{g,n}(\boldsymbol{\infty})$, the holography principle was proved in \cite{krasnov2000holography} and \cite{park2017potentials}, respectively. Our result implies that $V_{\text{ren}}$  acts as a  K{\"a}hler potential for a particular combination of the Weil–Petersson and Takhtajan–Zograf metrics that appears in the local index theorem for orbifold Riemann surfaces \cite{takhtajan2019local}. Moreover, we demonstrate that under the conformal transformations, the change of function $\Gpotential_{\boldsymbol{m}}$ is equivalent to the Polyakov anomaly, which indicates that the function $\Gpotential_{\boldsymbol{m}}$ is a consistent height function with a unique hyperbolic solution. Consequently, the associated renormalized hyperbolic volume $V_{\text{ren}}$ also admits a Polyakov anomaly formula. The method we used to establish this equivalence may provide an alternative approach to derive the renormalized Polyakov anomaly for Riemann surfaces with punctures (cusps), as described in \cite{albin2013ricci}.}
\begin{document} 
	\maketitle
	\flushbottom
	
	\section{Introduction}\label{sec:intro}
	In this paper, we explore the holographic duality for three-dimensional spaces with a conformal boundary given by an orbifold Riemann surface of signature $\left(g;m_1,\dots, m_{n_e};n_p\right)$. Such orbifold Riemann surfaces are  two-dimensional surfaces with $n_e$ conical points labeled by $n=1,...,n_e$, each with a ramification index $m_1,...,m_{n_e}$, and  $n_p$ punctures.\footnote{See \cite{Taghavi2024classical} for more details.} These spaces are constructed from Euclidean AdS$_3 (\equiv\hspace{-1mm}\mathbb{U}^3)$ through discrete identifications by Kleinian groups, a special class of which are Schottky (extended) groups. Employing the semi-classical approximation to the gravity path integral, we compute the appropriately renormalized volume of the space, corresponding to the gravitational action. As we will demonstrate, this renormalized volume precisely matches the function (generalized Liouville action) $\Gpotential_{\boldsymbol{m}}$ introduced in \cite{Taghavi2024classical}. Thus, this paper offers a new three-dimensional perspective on certain results previously obtained by two of the authors through purely two-dimensional analyses in \cite{Taghavi2024classical}. 
	
	Let us recall that for compact Riemann surfaces\footnote{We understand that surfaces with conical singularities are compact. However, in this manuscript, when we use the term "compact," we are specifically referring to surfaces that do not possess conical singularities.} of arbitrary genus, the relation between bulk renormalized volume and standard Liouville action was proven by Krasnov \cite{krasnov2000holography} and Takhtajan and Teo \cite{Takhtajan_2003}.\footnote{The recognition that the standard Liouville action emerges as the effective action on the boundary was noted in \cite{seiberg1999d1,skenderis2000quantum}.} Park, Tahtajan, and Teo \cite{park2017potentials} extended the holographic correspondence for punctured Riemann surfaces using both Schottky and quasi-Fuchsian global coordinates. More recently, Park and Teo \cite{park2018liouville} extended the results of \cite{park2017potentials} to orbifold Riemann surfaces with quasi-Fuchsian global coordinates; in particular, they prove the holographic correspondence for quasi-Fuchsian 3-orbifolds. From a physics perspective, these orbifolds were also examined by Chandra, Collier, Hartman, and Maloney \cite{chandra2022semiclassical}. However, a thorough investigation of the holographic correspondence for Riemann orbisurfaces with Schottky global coordinates has not yet been published, and in this paper, we fill this gap.\footnote{Several studies have investigated the Einstein-Hilbert action on AdS$_3$ with conical singularities in connection with Liouville vertex operator correlation functions (see, e.g., \cite{krasnov2002lambda,abajian2024correlation}). See also \cite{He:2024xbi}.}
	
	Beyond extending previous works \cite{park2017potentials,Takhtajan_2003}, there are also significant motivations stemming from the importance of exploring quantum gravity in three dimensions. Previous research \cite{maloney2010quantum,keller2015poincare} has shown that using only smooth saddle points to calculate the gravitational path integral in three-dimensional gravity leads to two main issues with the resulting regularized partition function. First, the twist range at a constant spin is continuous rather than discrete. Second, at high spins and energies near the edge of the spectrum, the density of states becomes negative. The first issue might be mitigated by considering recent findings that suggest the dual theory of AdS gravity corresponds to an ensemble of quantum systems \cite{saad2019jt,stanford2019jt,jafferis20243d}. To address the non-unitarity problem, it has been proposed to modify the three-dimensional theory by introducing massive particles, which implies the need to consider three-dimensional conical manifolds alongside smooth saddles, as discussed in \cite{benjamin2020pure}.\footnote{In the classical limit, the deficit angle $\Delta\phi$ sourced by a massive scalar
		particle is related to its mass by $\Delta\phi\approx 8\pi G_{N}m$, see \cite{deser1984three}.} Therefore, exploring the connection between this modified three-dimensional gravitational theory and the  $\Gpotential_{\boldsymbol{m}}$ on its conformal boundary could provide deeper insights into understanding consistent three-dimensional quantum gravity.

	Another key motivation arises from the study of black hole formation by point particles \cite{Krasnov_2001,krasnov2002lambda}: in the collision of $n$ point particles, the resulting black hole can have a non-trivial topology inside the event horizon. Since the geometry inside the horizon is inaccessible to an external observer, it can be interpreted as a manifestation of black hole microstates. The results of Ref.\hspace{.5mm}\cite{krasnov2002lambda} demonstrate that, for each internal topology, the probability of black hole production in such collisions is determined by the $2n$-point function of heavy Liouville vertex operators on the asymptotic boundary of spacetime, represented by the Schottky double of a spatial slice. In the semi-classical limit, this probability is dominated by the exponential of the classical Liouville action evaluated on a genus-$g$\footnote{Note that $g \geq 2$ when the internal topology of the black hole is non-trivial.} Riemann surface with conical singularities at the insertion points of the vertex operators.
		
	Following the perspective that the bulk renormalized volume $V_{\text{ren}}$ is connected to the function $\Gpotential_{\boldsymbol{m}}$, in Section \ref{Section3} we show that this volume satisfies  Polyakov-type anomaly formula. To establish this, we first examine the connection between the $\Gpotential_{\boldsymbol{m}}$ on orbifold Riemann surfaces and Polyakov anomaly. The key idea underlying this connection is the uniformization theorem.
	
	The uniformization theorem states that every simply connected Riemann surface can be conformally mapped to one of three canonical geometries: the sphere (positive curvature), the Euclidean plane (zero curvature), or the hyperbolic plane (negative curvature). For surfaces of genus greater than 1, the theorem implies that the surface admits a unique hyperbolic metric within its conformal class. The Liouville action offers one method to find this unique hyperbolic metric, which provides a variational principle where the critical point corresponds to that unique hyperbolic metric. Another method involves solving the extremal problem for $-\log\det\Delta$.\footnote{The approach of proving the uniformization theorem by solving an extremal problem within a conformal class of metrics dates back at least to Berger's work \cite{berger1971riemannian}.}
	
	Let us recall that the modern quantum geometry of strings primarily explores all surfaces, analyzing variations in their metrics and the determinants of the corresponding Laplacians. A key area of interest is how this determinant, viewed as a function of the metric on a given surface, behaves, particularly in identifying its extreme values under specific metric constraints. This topic is extensively examined in the seminal work by Osgood, Phillips, and Sarnak \cite{osgood1988extremals}. They investigated the function $-\log\det\Delta$\footnote{The zero eigenvalue(s) are subtracted.} as a height function on the space of metrics for a compact, orientable, smooth surface of genus $g$. They discovered that for surfaces with $g> 1$, this function reaches its minimum at the unique (up to scaling and treating isometric surfaces as equivalent) hyperbolic metric within any given conformal class of metrics and has no other critical points.

	This highlights a deep connection between the spectrum of $-\log\det\Delta$ (and thus the Polyakov anomaly)  and the Liouville action, identified by Takhtajan and Teo for compact Riemann surfaces, see \cite{Takhtajan_2003}. We show that this connection can be extended to orbifold Riemann surfaces using $\Gpotential_{\boldsymbol{m}}$. The method we employed to find this extension may provide an alternative approach for deriving the renormalized Polyakov anomaly for Riemann surfaces with punctures (cusps), whether or not they have conical singularities. Accordingly, since we have demonstrated that the renormalized hyperbolic volume $V_{\text{ren}}$ coincides with $\Gpotential_{\boldsymbol{m}}$, it follows that this volume also obeys Polyakov anomaly.
	
	The structure of this paper is as follows: In Section~\ref{Renvolume}, we compute the bulk renormalized volume $V_{\text{ren}}$ using the framework of double (co)homology complexes. Section~\ref{Section3} demonstrates that $V_{\text{ren}}$ satisfies the Polyakov anomaly. Finally, in Section~\ref{Conclusion}, we present our conclusions and discuss some potential future research directions. The mathematical background for this paper, including the definitions of group homology and group cohomology, is primarily drawn from \cite{Aldrovandi_1997}. Nearly all the essential mathematical details are presented in Appendices \ref{asymapp}-\ref{regsurf} to make the paper self-contained.
	\section{Renormalized Volume and Holography Principle}\label{Renvolume}
	In bulk three dimensions, the on-shell value of the Einstein-Hilbert action is expected to be proportional to the hyperbolic volume of the 3-manifold $M$. However, since the metric diverges at the conformal boundary, the volume is infinite, and one needs to regularize it by truncating the three-dimensional manifold $M$ by a cutoff surface labeled by a parameter $\varepsilon$. Let us denote this surface by $f= \varepsilon$. In asymptotically (locally) AdS spaces, the volume grows as half of the boundary area. Accordingly, to do the regularization procedure, one should calculate the volume $V_{\varepsilon}$ above the cutoff surface of fixed $\varepsilon$, then subtract $A_{\varepsilon}/2$ and finally take the limit $\varepsilon \rightarrow 0$. It is well-known that subtracting only the area term does not eliminate all the divergences, leaving a logarithmic divergence that must be addressed. By subtracting this logarithmic divergence, the final result becomes scale-dependent, which is linked to the conformal anomaly of the boundary dual theory.\textcolor{blue}{\footnote{Early examples of renormalized volumes can be found in the work of Henningson and Skenderis \cite{henningson1998holographic} for asymptotically hyperbolic Einstein metrics.}} In this paper, we compute the renormalized volume of a three-dimensional Schottky manifold—commonly known as a handlebody geometry—with lines of conical singularities that reach the conformal boundary. This manifold is $M=(\mathbb{U}^3\cup\singrigon)\slash\Sigma$ with the conformal boundary at infinity given by $X=\singrigon\slash\Sigma$, where $\Sigma$ is marked normalized Schottky group with region of discontinuity $\singrigon\subset \mathbb{C}$ (see Appendix \ref{Schotkkyreview} for more details).
	
	In Appendix \ref{regsurf}, we provide a detailed explanation of constructing the regularizing surface $f=\varepsilon$\hspace{1mm} for Schottky manifolds, whose conformal boundaries may or may not contain conical singularities and punctures. To offer a general overview, we present a simplified (intuitive) illustration of the compact case. However, interested readers should refer to Appendix D for a more comprehensive treatment.
	
	For the compact case,\footnote{See Appendix \ref{Schotkkyreview}.} the Schottky manifold  $M$  is formed by taking the quotient of $\mathbb{U}^3\cup \Omega$ by the Schottky group $\Sigma$. Consequently, the appropriate regularizing surface on  $M$ must be a $\Sigma$-automorphic function. The group $\Sigma$ is a loxodromic subgroup of  $\PSLC$, the isometry group of Euclidean AdS$_{3}$ space. We first determine this isometry group and then use it to identify the suitable regularizing surface $f=\varepsilon$. For our purposes, it is most convenient to work with $\text{AdS}_3$ space in Poincare coordinate. The hyperbolic metric in this coordinate is 
	\begin{equation}
	\begin{aligned}
	ds^2_{\text{\tiny AdS}_3} = \frac{dr^2 + dz d\bar{z}}{r^2},
	\end{aligned}\label{AdS3}
	\end{equation}
	where the conformal boundary is located at $r=0$ and $z$ is a holomorphic coordinate on it. The isometry group of the metric \eqref{AdS3} is $\PSLC$ and let denote it by $\Gamma$. To construct the $\Gamma$-invariant cutoff surface $f$, one needs the explicit form of the $\PSLC$-action on $\mathbb{U}^3$, which is defined by utilizing quaternionic numbers. Let us represent the $Z =(z,r) \in \mathbb{U}^3$ by a quaternion 
	\begin{equation}
	\mathbf{Z} = x.\mathbf{1} + y.\mathbf{i}+r.\mathbf{j},
	\end{equation}
	where 
	\begin{equation}
	\mathbf{1} = \begin{pmatrix}
	1 & 0 \\
	0 & 1 
	\end{pmatrix},~~\mathbf{i} = \begin{pmatrix}
	i & 0 \\
	0 & -i 
	\end{pmatrix},~~\mathbf{j} = \begin{pmatrix}
	0 & -1 \\
	1 & 0 
	\end{pmatrix}.
	\end{equation}
	Accordingly, the complex variable $z\in \mathbb{C}$ is represented by
	\begin{equation}
	\mathbf{z} = \Re(z).\mathbf{1}+\Im(z).\mathbf{i} 
	\end{equation}
	The transformation of $Z$ under the group element $\gamma = \begin{pmatrix}
	a & b \\
	c & d 
	\end{pmatrix}\in \PSLC$, i.e. $Z\rightarrow \gamma Z$, is given by
	\begin{equation}
	\mathbf{Z} \rightarrow  \frac{\mathbf{a}~\mathbf{Z} + \mathbf{b}}{\mathbf{c}~\mathbf{Z}+\mathbf{d}}
	\end{equation}
	Explicitly, we have
	\begin{equation}
	\mathbf{a}~\mathbf{Z} + \mathbf{b} = \begin{pmatrix}
	a & 0 \\
	0 & \bar{a}
	\end{pmatrix}\begin{pmatrix}
	z & -r \\
	r & \bar{z} 
	\end{pmatrix}+\begin{pmatrix}
	b & 0 \\
	0 & \bar{b}
	\end{pmatrix} = \begin{pmatrix}
	a z+b & -a r \\
	\bar{a} r & \bar{a} \bar{z}+\bar{b}
	\end{pmatrix},
	\end{equation}
	so that
	\begin{equation}\label{gZ}
	\gamma Z = \begin{pmatrix}
	a z+b & -a r \\
	\bar{a} r & \bar{a} \bar{z}+\bar{b}
	\end{pmatrix}\begin{pmatrix}
	c z+d & -c r \\
	\bar{c} r & \bar{c} \bar{z}+\bar{d}
	\end{pmatrix}^{-1}.
	\end{equation}
	Since
	\begin{equation}
	\begin{pmatrix}
	c z+d & -c r \\
	\bar{c} r & \bar{c} \bar{z}+\bar{d}
	\end{pmatrix}^{-1} = J_{\gamma}(Z) \begin{pmatrix}
	\bar{c} \bar{z}+\bar{d} & c r \\
	-\bar{c} r & c z + d
	\end{pmatrix},
	\end{equation}
	where 
	\begin{equation}
	J_{\gamma}(Z) =1/\left(|c z +d|^2+|c r|^2\right),\label{Jgammaz}
	\end{equation}
	one can see that the equation \eqref{gZ} can be written as
	\begin{equation}
	\gamma Z = J_{\gamma}(Z) \begin{pmatrix}
	(a z +b)(\overline{c z +d})+ a \bar{c}~r^2 ~~~& c r (a z +b)- a r (c z +d) \\
	\bar{a} r (\overline{c z +d})-\bar{c} r(\overline{a z +b}) & (c z + d)(\overline{a z +b})+ \bar{a} c~r^2
	\end{pmatrix}.
	\end{equation}
	Setting $z(Z)= z$, $r(Z)=r$ and $\gamma Z= \begin{pmatrix}
	z(\gamma Z) & -r(\gamma Z) \\
	r(\gamma Z) & \bar{z}(\gamma Z)
	\end{pmatrix}$, together with $ad -bc=1$, gives
	\begin{equation}
	\begin{aligned}
	z(\gamma Z) &= J_{\gamma}(Z)\left((a z +b)(\overline{c z +d})+ a \bar{c}~r^2\right),\\\bar{z}(\gamma Z) &= J_{\gamma}(Z) \left((c z + d)(\overline{a z +b})+ \bar{a} c~r^2\right),\\
	r(\gamma Z) &= J_{\gamma}(Z)~r.
	\end{aligned}\label{adstr}
	\end{equation}
	The transformations in equation \eqref{adstr} are simply the isometries of the metric \eqref{AdS3}. For the regularization procedure discussed, we require the near-boundary ($r\rightarrow 0$) form of these transformations, which are as follows
	\begin{equation}
	\begin{aligned}
	z(\gamma Z) &= \gamma(z)+\mathcal{O}(r^2),\\\bar{z}(\gamma Z) &= \overline{\gamma(z)}+\mathcal{O}(r^2),\\
	r(\gamma Z) &= |\gamma'|~r+ \mathcal{O}(r^3).
	\end{aligned}\label{asyadstr}
	\end{equation}
	Since for every $\varphi \in \mathcal{C}\mathcal{M}(\Omega/\Sigma)$ we have $\varphi(\gamma(z))=\varphi(z)-\log|\gamma'(z)|^2$, the $\Sigma$-automorphic function\footnote{In deriving the asymptotic transformations \eqref{asyadstr}, we have just applied the condition $\text{det}\gamma=1$.} $f$ which is positive on $\mathbb{U}^3$ satisfies
	\begin{equation}
	f(Z)= r~e^{\varphi(z)/2} +\mathcal{O}(r^3)~~~~~ \text{as} ~~r\rightarrow 0.\label{fcompactt}
	\end{equation}
	In this simplified version, the explicit forms of subleading terms cannot be obtained rigorously. As mentioned earlier, Appendix \ref{regsurf} provides a comprehensive and detailed analysis. According to equation \eqref{fZ}, the precise form of the regularization surface for the compact case is given by
	\begin{equation}
	f(Z) = re^{\varphi(z)/2}-\frac{1}{2}e^{\varphi(z)/2}\varphi'(z)\left(\sum_{\gamma \in \Gamma} \eta(\gamma Z)\frac{\overline{\gamma''(z)}}{\overline{\gamma'(z)}}\hspace{1mm} \right)r^3+\mathcal{O}(r^5).\label{fcompact}
	\end{equation} 
	As explained in the Appendix \ref{regsurf}, this form remains applicable even in the presence of conical singularities and punctures, which are obtained as quotients of the symmetric group acting on $\mathbb{U}^{3}\cup\Omega$. In our case, we aim to define the regularizing surface $f=\varepsilon$ by truncating the regions around the conical singularities and punctures. To accomplish this, it suffices to use the expression in \eqref{fZ} and substitute the asymptotic behavior of the field $\varphi$ near each singularity to determine the regularizing surface. Based on and utilizing \eqref{conicalasymp}, near each conical singularity and puncture, one finds\footnote{Appendix \ref{regsurf} discusses two types of regularization. The first method, based on the works \cite{kra1972} and \cite{Takhtajan_2003}, is used to derive equations \eqref{fcompact}, \eqref{fconical}, and \eqref{fpuncture}. The second regularization approach, employed to obtain equation \eqref{fcompactt}, is further extended in the last part of Appendix \ref{regsurf} to account for the presence of singularities.}
	\begin{equation}
	f(Z)_{\text{Conical}} = re^{\varphi(z)/2}+\mathcal{O}\left(\frac{r^3}{|z-z_j|^{2-\frac{1}{m_j}}}\right),\label{fconical}
	\end{equation} 
	and
	\begin{equation}
	f(Z)_{\text{Cusp}} = re^{\varphi(z)/2}+\mathcal{O}\left(\frac{r^3}{|z-z_j|^{2}\log|z-z_j|}\right).\label{fpuncture}
	\end{equation}
	Based on equations \eqref{fcompact}, \eqref{fconical}, and \eqref{fpuncture}, it is important to note that the leading behavior of the function $f$ remains the same whether or not singularities are present. Furthermore, the level surface $f=\varepsilon$ intersects the points $(z_i,0)$s, making it non-compact. Therefore, to use  $f$ as a level-defining function for the truncated fundamental region $R\cap\{f=\varepsilon\}$,\footnote{See Appendix.\ref{homology}.} it is necessary to remove neighborhoods around the points $(z_i,0)$ in $\mathbb{U}^3$. As a result, the regularized truncated fundamental region should be defined as follows:\footnote{The $||~||$ is the Euclidean distance in $\overline{\mathbb{U}}^3= \mathbb{U}^3\cup \mathbb{C}$.}
	\begin{equation}
	R_{\varepsilon} = R\cap\{f=\varepsilon\}\backslash \bigcup_{j=1}^{n}\big{\{}(z,r)\in\mathbb{U}^3\big| ~|| (z,r)-(z_i,0)|| \leq \tilde{\varepsilon}\big{\}}.\label{Repsilon1}
	\end{equation}
	To determine the parameter $\tilde\varepsilon$, one should observe that in the limit $r\rightarrow 0$, under Schottky transformations by each generator $L_k\in\Sigma$, we have
	\begin{equation}
	|L_k z-L_k z_j| = (z-z_j)L'_k(z_j)+\mathcal{O}(z-z_j)^2 < \tilde\varepsilon,
	\end{equation}
	implying that if $\tilde\varepsilon$ is assumed to be constant, then the region $|z - z_j| \leq \tilde\varepsilon$, up to $\mathcal{O}(\tilde\varepsilon^2)$ terms, transforms to
	\begin{equation}
	|z-z_j| \leq  \tilde\varepsilon/L'_k(z_j).    
	\end{equation}
	However, since the point $L_k z_j$ has the same branching number $m_i$, the radius of the regularized circle should remain unchanged, implying that the parameter $\tilde\varepsilon$ also changes under the Schottky transformations. The Hauptmodule coefficients $J_1^{(j)}$ in equations \eqref{Jparabolic1} and \eqref{Jparabolic2} transform as $J_1^{(j)} \rightarrow L'_k(z_j) J_1^{(j)}$ under Schottky transformations (see Section~3 in \cite{Taghavi2024classical}). This leads to $\tilde\varepsilon = \left|J_1^{(j)}\right| \varepsilon$, and the equation \eqref{Repsilon1} changes to
	\begin{equation}
	R_{\varepsilon} = R\cap\{f=\varepsilon\}\backslash \bigcup_{j=1}^{n}\big{\{}(z,r)\in\mathbb{U}^3\big| ~||(z,r)-(z_j,0)|| \leq \Big|J_{1}^{(j)}\Big|\hspace{.5mm}\varepsilon\big{\}}.\label{Repsilon}
	\end{equation}
	It is crucial to highlight that the cutoff surface $f(Z) = \varepsilon$ must adjust to the local geometry near the conical singularities, forming a shape that reflects the angular deficit, as outlined in equations \eqref{fconical} and \eqref{fpuncture}. While it deforms locally around the singularities,  the cutoff surface should globally align with the symmetry pattern imposed by the Schottky group (a subgroup of $\PSLC$) in smooth regions of the space, as indicated in equation \eqref{fcompact}. Additionally, during the subsequent calculations, it is important to remember that we first cut out the areas around the conical singularities and punctures and then uniformize the remaining surface using the standard Schottky group. Consequently, the group used for constructing the (co)homology double complexes will be the standard Schottky group, while the local information remains encoded in the behavior of the field 
	$\varphi$ near the singularities. If we had generated the conical singularities and punctures using the extended Schottky group, incorporating elliptic and parabolic elements, we would also need to develop double complexes for this extended group.

	Now, with the appropriate function $f$ determined, we can proceed to the detailed calculation of the renormalized volume of the corresponding three-dimensional Schottky manifold $M$ with lines of conical singularities and a conformal boundary $X= \singrigon\slash\Sigma$, where the fundamental domain of $\singrigon$ will be denoted by $\singfund_{\varepsilon}$.
	
	To compute the area term
	$A_{\varepsilon}$, we require the induced metric on the regularizing cutoff surface $f(Z)=\varepsilon$ which takes the following form:
	\begin{equation}
	\begin{aligned}
	ds^2 = \frac{f(Z)}{r^2} \bigg(
	\big(1+ \frac{f_x^2}{f_r^2}\big)dx^2 + 
	\big(1+ \frac{f_y^2}{f_r^2}\big)dy^2 +
	\frac{2f_xf_y}{f^2_r} dxdy
	\bigg).
	\end{aligned}
	\end{equation}
	So, the induced  area form on the regularizing surface becomes
	\begin{equation}
	\begin{aligned}
	d A_\varepsilon = \frac{1}{r^2} \sqrt{1+ \Big(\frac{f_x}{f_r}\Big)^2+ \Big(\frac{f_y}{f_r}\Big)^2} dx\wedge dy.
	\end{aligned}\label{area}
	\end{equation}
	By noting that
	\begin{equation}
	\begin{aligned}
	\frac{f_x(Z)}{f_r(Z)}= \frac{r}{2}~ \partial_x \varphi +\mathcal{O}(r^3), \qquad 
	\frac{f_y(Z)}{f_r(Z)}= \frac{r}{2}~ \partial_y \varphi + \mathcal{O}(r^3),
	\end{aligned}
	\end{equation}
	the area term \eqref{area} modifies to\footnote{In the second line, we transformed into complex coordinates.}
	\begin{equation}
	\begin{aligned}
	A_\varepsilon[\varphi] &= \iint_{\singfund_{\varepsilon}} \frac{dx\wedge dy}{r^2}\sqrt{1+ \frac{r^2}{4}(\partial_x^2 \varphi + \partial_y^2 \varphi)+ \mathcal{O}(r^4)}  
	~ + \smallO(1) \\ &= 
	\iint_{\singfund_{\varepsilon}}  \frac{dx\wedge dy}{r^2}  + \frac12\iint_{\singfund_{\varepsilon}}
	\partial_z \varphi \partial_{\bar{z}} \varphi \; dx\wedge dy + \smallO(1) \\ &=
	\iint_{\singfund_{\varepsilon}}  \frac{dx\wedge dy}{r^2} + \frac{i}{4} \stks{ \check{\omega}[\varphi],\singfund_{\varepsilon}},
	\end{aligned}
	\label{Aeps}
	\end{equation}
	where $\singfund_{\varepsilon}$ is the complement of the following region
	\begin{equation}
	\{f=\varepsilon\}\bigcap ~ \bigcup_{j=1}^{n}\big{\{} (z,r)\in \mathbb{U}^3 \big{|}\hspace{1mm}|(z,r)-(z_j,0)|\leq \Big|J_{1}^{(j)}\Big|\hspace{.5mm}\varepsilon\big{\}},\label{Depsi}
	\end{equation}
	and 
	\begin{equation}
	\check{\omega}[\varphi] = \omega[\varphi]-e^{\varphi} dz\wedge \dd\bar{z} = \partial_{z}\varphi \partial_{\bar{z}}\varphi~dz\wedge \dd\bar{z}.\label{omegacheck} 
	\end{equation}
	
	The regularized volume of the fundamental region is
	\begin{equation}
	\begin{aligned}
	V_\varepsilon[\varphi] = \stks{\omega_3,R_\varepsilon} &= \stks{\omega_3 , R_\varepsilon-S_\varepsilon}\\& \hspace{-2.5mm}
	\overset{\eqref{Dvarpi}}{=}\stks{D\varpi,R_{\varepsilon}-S_{\varepsilon}}\\& \hspace{-2.5mm}\overset{\eqref{varpi}}{=} \stks{D(\omega_2 -\omega_1-\omega_0) ,R_\varepsilon-S_\varepsilon } \\ &\hspace{-2.5mm}\overset{\eqref{Dpartial}}{=}
	\stks{\omega_2 -\omega_1-\omega_0, \partial(R_\varepsilon-S_\varepsilon)}\\& \hspace{-2.5mm}\overset{\eqref{RmSeps}}{=} \stks{\omega_2 -\omega_1-\omega_0, -\singfund_{\varepsilon}-L} \\ &=
	-\stks{\omega_2,\singfund_{\varepsilon}} + \stks{\omega_1,L}  %-\bcancel{\stks{\omega_0,\#}} 
	\overset{\eqref{omega2}}{=} 
	\frac12 \iint_{\singfund_{\varepsilon}}  \frac{dx\wedge dy}{r^2} + \stks{\omega_1,L}.
	\end{aligned}\label{V1}
	\end{equation}
	In the first equality, we've used that adding a degree two element from the total homology complex does not change the integration since $\omega_3$ is a degree three element of the total cohomology complex.
	Also, in the last line, $\omega_0$ contribution vanishes due to the lack of vertices in the fundamental domain of Schottky uniformization. Now, let's focus on $\omega_1$ contribution. Since (see Appendix.\ref{CohomologyU3})
	\begin{equation}
	\begin{aligned}
	\big(\omega_1\big)_{\gamma^{-1}} = -\frac{i}{8} \log \Big(
	|r~c(\gamma)|^2 J_\gamma(Z)
	\Big) \Big(
	\frac{\gamma''}{\gamma'}dz -  \frac{\overline{\gamma''}}{\overline{\gamma'}}d\overline{z}
	\Big),
	\end{aligned}
	\end{equation}
	plugging from the cutoff function $f(Z) =  r~e^{\varphi(z)/2} + \mathcal{O}(r^3)=\varepsilon $ (see equations \eqref{fcompact}, \eqref{fconical} and \eqref{fpuncture}) and noting that 
	\begin{equation}
	J_{\gamma}(Z) = |\gamma'| +\mathcal{O}(r^2)~~~~\text{as}~~~~r\rightarrow 0
	\end{equation}
	results in
	\begin{equation}
	\begin{aligned}
	\big(\omega_1\big)_{\gamma^{-1}} &=
	-\frac{i}{8} \log \Big(
	|\varepsilon~c(\gamma)|^2 ~e^{-\varphi(z)} |\gamma'| 
	\Big) \Big(
	\frac{\gamma''}{\gamma'}dz -  \frac{\overline{\gamma''}}{\overline{\gamma'}}d\overline{z}
	\Big) +\smallO(1)\\ &= 
	-\frac{i}{8} \Big(
	2\log \varepsilon  -\varphi + \frac12 \log |\gamma'|^2 + \log |c(\gamma)|^2  
	\Big)\varkappa_{\gamma^{-1}}\\ &=
	-\frac{i}{4}  \varkappa_{ \gamma^{-1}}
	\log \varepsilon 
	+\frac{i}{8} ~\theta_{\gamma^{-1}} [\varphi],
	\end{aligned},\label{w1}
	\end{equation}
	where we have defined\footnote{It is worth noting that $\dd \chi_{\gamma^{-1}}=(\delta\chi)_{\gamma_1^{-1},\gamma_2^{-1}}=0.$}
	\begin{equation}
	\varkappa_{\gamma^{-1}} = \frac{\gamma''}{\gamma'}dz -  \frac{\overline{\gamma''}}{\overline{\gamma'}}d\overline{z} ~~~~~~~\text{and}~~~~~~\theta_{\gamma^{-1}}[\varphi] = \Big(\varphi - \frac12 \log |\gamma'|^2 - \log |c(\gamma)|^2  
	\Big)\varkappa_{\gamma^{-1}}.\label{chigm1}
	\end{equation}
	Substituting $\omega_{1}$ from \eqref{w1} into \eqref{V1} yields
	\begin{equation}
	\begin{aligned}
	V_\varepsilon[\varphi] =\frac12 \iint_{\singfund_{\varepsilon}} \frac{dx\wedge dy}{r^2} -\frac{i}{4}\stks{\varkappa_{\gamma^{-1}},L}\log \varepsilon + \frac{i}{8} \stks{\theta_{\gamma^{-1}}[\varphi],L}.
	\end{aligned}\label{V2}
	\end{equation}
	Moreover,
	\begin{equation}
	\begin{aligned}
	\stks{\varkappa_{\gamma^{-1}},L}&=\stks{\delta \varkappa_1,L} \\&  = \stks{\varkappa_1,\partial'' L} \overset{\eqref{parD}}{=} \stks{\varkappa_1,\partial' \singfund_{\varepsilon}} - \sum_{j=1}^{n}\stks{\varkappa_1, c'_j} \\ &=
	\stks{\dd\varkappa_1,\singfund_{\varepsilon}} - \sum_{j=1}^{n}\stks{\varkappa_1 ,  c'_j},\label{qq1}
	\end{aligned}
	\end{equation}
	with
	\begin{equation}
	\chi_{1} = \partial_{\bar{z}}\varphi ~\dd \bar{z}-\partial_{z}\varphi~\dd z.\label{chi1}
	\end{equation}
	The equality in the first line of \eqref{qq1} can be derived by observing that under the conformal transformation $z\rightarrow \tilde{z} =\gamma(z)$, we have
	\begin{equation}
	\dd z = \dd \tilde{z}/\gamma',~~~~\partial_{z}= \gamma' \partial_{\tilde{z}},~~~~~\tilde{\varphi}=\varphi(\gamma(z)) = \varphi-\log|\gamma'|^2,\nonumber
	\end{equation}
	Hence
	\begin{equation}
	(\delta\chi_1)_{\gamma_1^{-1}} = \chi_{1}.\gamma-\chi_{1} \overset{\eqref{chi1}}{=} -\partial_{z}\left(\varphi-\log|\gamma'|^2\right) \dd z +\partial_{\bar{z}} \left(\varphi-\log|\gamma'|^2\right)\dd \bar{z}-\chi_{1} \overset{\eqref{chigm1}}{=} \chi_{\gamma^{-1}}.
	\end{equation}
	Equation \eqref{qq1} can be further simplified by observing that $\dd \chi_{1} = 2\partial_{z}\partial_{\bar{z}}\varphi ~dz \wedge d\bar{z}$, and thus,
	\begin{equation}
	\stks{\dd \chi_{1},\singfund_{\varepsilon}} = 2\iint_{\singfund_{\varepsilon}} \partial_{z}\partial_{\bar{z}}\varphi~dz \wedge d\bar{z} = 2i\iint_{\singfund_{\varepsilon}} K[\varphi] ~e^{\varphi}d^2z,
	\end{equation}
	where $K[\varphi] = -2 e^{-\varphi} \partial_{z}\partial_{\bar{z}}\varphi$ is Gaussian curvature of the metric $e^{\varphi}\dd z\dd{\bar{z}}$. Accordingly,
	\begin{equation}
	\stks{\varkappa_{\gamma^{-1}},L}=2i~\Big(\iint_{\singfund_{\varepsilon}} K[\varphi] ~e^{\varphi}d^2z +\frac{i}{2}\sum_{j=1}^{n}\oint_{c_j^{'\hspace{.2mm}}}\left(\partial_{\bar{z}}\varphi ~\dd \bar{z}-\partial_{z}\varphi~\dd z\right)\Big).\label{qq2}
	\end{equation}
	Let us rewrite the boundary contributions in equation \eqref{qq2} using the geodesic curvature of each circular boundary $c_{j}^{'}$.\footnote{Recall that we are ultimately considering the  $r\rightarrow 0$ limit.} In the conformally flat metric $e^{\varphi}\dd z \dd \bar{z}$, if the boundary $c_{j}^{'}$ is a smooth curve, the geodesic curvature $k[\varphi]$ of $c_{j}^{'}$
	is given by
	\begin{equation}
	k[\varphi] =e^{-\varphi/2}\big(k[\text{flat}] +\partial_{n}\varphi\big),
	\end{equation}
	where $k[\text{flat}]$ is the geodesic curvature of the boundary with respect to the flat metric and $\partial_{n}\varphi$ is the normal derivative of $\varphi$ along the boundary. However, since no regularizing circle is needed for the flat metric, the contribution of $k[\text{flat}]$ should be excluded. Given a circular boundary with radius $\tilde{\varepsilon}$,\footnote{See the equation \eqref{Depsi}.} we have
	\begin{equation}
	\partial_{n}\varphi= e^{i\theta}\partial_{z}\varphi +e^{-i\theta}\partial_{\bar{z}}\varphi,~~~~\dd s=e^{\varphi/2}\tilde{\varepsilon}~\dd\theta= \frac{i}{2}e^{\varphi/2}\big(e^{i\theta}\dd\bar{z}-e^{-i\theta}\dd z\big),\label{k1}
	\end{equation}
	where $\dd s$ is the length element of the boundary in the conformally flat metric. Thus, with using asymptotic behaviour of the field $\varphi$ near the  singularities (see Appendix \ref{asymapp}), we obtain
	\begin{equation}
	\oint_{c_{j}^{'}} k[\varphi] ~\dd s =\frac{i}{2}\oint_{c_{j}^{'}}\left(\partial_{\bar{z}}\varphi ~\dd\bar{z}-\partial_{z}\varphi~ \dd z\right).\label{k2}
	\end{equation}
	Now, according to the Gauss-Bonnet theorem for a manifold $X$ with the metric $e^{\varphi}\dd z \dd \bar{z}$, i.e,
	\begin{equation}
	\iint_{X} K[\varphi] ~e^{\varphi}d^2z +\oint_{\partial X}k[\varphi]~ \dd s= 2\pi \chi(X),
	\end{equation}
	together with using \eqref{k2}, the equation \eqref{qq2} simplifies  to\footnote{For the orbifold Riemann surface with signature $\left(g;m_1,\dots, m_{n_e};n_p\right)$,  $\chi(X)=2-2g-\sum_{j=1}^{n_e}(1-\frac{1}{m_j})-n_p.$}
	\begin{equation}
	\stks{\varkappa_{\gamma^{-1}},L} = 4\pi i~ \chi(X).\label{q1}
	\end{equation}
	Substituting $\stks{\varkappa_{\gamma^{-1}},L}$ from \eqref{q1} in \eqref{V2} yields
	\begin{equation}
	\begin{aligned}
	V_\varepsilon[\varphi] = \frac12 \iint_{\singfund_{\varepsilon}} \frac{dx\wedge dy}{r^2}  +\frac{i}{8} \stks{\theta_{\gamma^{-1}}[\varphi],L}+ \pi \chi(X) \log\varepsilon,
	\end{aligned}
	\label{Veps}
	\end{equation}
	and putting equations \eqref{Aeps} and \eqref{Veps} together gives
	\begin{equation}
	\begin{aligned}
	V_\varepsilon[\varphi] - \frac12 A_\varepsilon[\varphi]-\pi \chi(X) \log \varepsilon = -\frac{1}{4} \Big(\frac{i}{2}
	\stks{\check{\omega} [\varphi], \singfund_{\varepsilon}} - \frac{i}{2} \stks{\theta_{\gamma^{-1}}[\varphi],L}
	\Big).
	\end{aligned}
	\label{tempeq1}
	\end{equation}
	Next, we will demonstrate that the terms in parentheses on the right-hand side of \eqref{tempeq1} correspond to the function $\Gpotential_{\boldsymbol{m}}$ (up to constant area term) described in \cite{Taghavi2024classical}. This function is described by Schottky uniformization of an orbifold Riemann surface with both cusp and conical singularities, which itself serves as the conformal boundary of the three-dimensional manifold $M$.  
	
	Based on the detailed explanation of Schottky uniformization of orbifold Riemann surfaces provided in Appendix \ref{Schotkkyreview}, the regularized Liouville action in the presence of singularities is (see \cite{zograf1988liouville,Takhtajan:2001uj,Taghavi2024classical})
	\begin{equation}
	\begin{split}
	&S_{\boldsymbol{m}}[\varphi] = S_{\boldsymbol{m}}(\SchottkyFund;z_1,\dots,z_{n}) 
	= S_{\singfund_{\text{reg}}}[\varphi] + \frac{\sqrt{-1}}{2} \sum_{k=2}^{g} \oint_{\mathcal{C}_k} \theta_{\gamma_k^{-1}}[\varphi],
	\end{split}\label{regularizeLiouvilleaction}
	\end{equation}
	with 
	\begin{multline}
	S_{\singfund_{\text{reg}}}[\varphi] = \lim_{\delta \to 0} \left(\iint_{\singfund_{\delta}} \Big(
	|\partial_z \varphi|^2 + e^\varphi
	\Big) d^2z + \underbrace{\frac{\sqrt{-1}}{2} \sum_{j=1}^{n_e} (1-\frac{1}{m_j}) \oint_{c_j^{\delta}} \varphi \Big(
		\frac{d\bar{z}}{\bar{z} - \bar{z}_j} - \frac{dz}{z-z_j}\Big) }_{\text{
			Call it $T_1$}}\right. \\ 
	\left. -2\pi \sum_{j=1}^{n_e} (1-\frac{1}{m_j})^2 \log \delta + 2\pi n_p \left(
	\log\delta + 2\log \left|\log \delta\right|
	\right)\right),
	\label{Sreg}
	\end{multline}
	%and 
	%\begin{equation}
	%\singfund_{\delta} = \singfund \big\backslash  \bigcup_{j=1}^{n}\left\{z \, \Big| \, |z-z_i|<\delta\right\}.
	%\end{equation}
	and the generalized Liouville action is given by (see Section~4  in \cite{Taghavi2024classical} and Appendix \ref{Schotkkyreview})
	\begin{equation}
	\begin{aligned}
	\mathscr{S}_{\boldsymbol{m}}[\varphi] = S_{\boldsymbol{m}}[\varphi] - \pi \log H = S_{\boldsymbol{m}}[\varphi] - \overbrace{\pi\sum_{j=1}^{n} \log \mathbf{h}_j^{m_j h_j}}^{\text{Call it $T_2$}},
	\end{aligned}
	\label{renvol}
	\end{equation}
	where $\Lponetial_j$s are Hermitian metrics for tautological line bundles $\linebundle_j$ over $\schottky_{g,n}(\boldsymbol{m})$, and $h_j=1-1/m_j^2$ is conformal weight corresponding to the order of a marked point $m_j$. In the following, the key idea is to absorb the $T_1$ and $T_2$ terms, respectively, in \eqref{Sreg} and \eqref{renvol} into the radius of the regularized circles around the singularities. Based on  the asymptotic behaviour of the field $\varphi$ near the  singularities (see Appendix.\ref{asymapp}), we have
	\begin{equation}
	T_1  = 4\pi \sum_{j=1}^{n_e} \left(   (\frac{1}{m_j}-\frac{1}{m_j^2}) \log \left|J_1^{(j)}\right|+   (1-\frac{1}{m_j})^2 \log \delta \right),
	\label{bdterm}
	\end{equation}
	and
	\begin{equation}
	T_2 \overset{\eqref{loghj}}{=} 2\pi \sum_{j=1}^{n_e} (1-\frac{1}{m_j^2}) \log\left|J_{1}^{(j)}\right| +2\pi \sum_{j=n_e+1}^{n} \log\left|J_{1}^{(j)}\right|.
	\end{equation}
	Accordingly the $\mathscr{S}_{\boldsymbol{m}} $ in \eqref{renvol} becomes
	\begin{multline}
	\mathscr{S}_{\boldsymbol{m}}[\varphi]  = \lim_{\delta \to 0} \left(\iint_{\singfund_{\delta}} \Big(
	|\partial_z \varphi|^2 + e^\varphi
	\Big) d^2z - 2\pi \sum_{j=1}^{ne} (1-\frac{1}{m_j})^2 \log\left|J_{1}^{(j)}\right| -2\pi \sum_{j=n_e+1}^{n} \log\left|J_{1}^{(j)}\right|\right. \\ \left. + \frac{\sqrt{-1}}{2} \sum_{k=2}^{g} \oint_{\mathcal{C}_k} \theta_{\gamma_k^{-1}}[\varphi]+2\pi \sum_{j=1}^{n_e} (1-\frac{1}{m_j})^2 \log \delta + 2\pi n_p \left(
	\log\delta + 2\log \left|\log \delta\right|
	\right)\right).
	\label{Sreg2}
	\end{multline}
	Moreover, for each puncture and conical singularity one can see that
	\begin{equation}
	\frac{\sqrt{-1}}{2} \iint_{\delta}^{\left|J_{1}^{(j)}\right|\delta} \partial_z \varphi \partial_{\bar{z} }\varphi ~dz\wedge d\bar{z} = 2\pi \log\left|J_{1}^{(j)}\right|,\hspace{.2cm}(\text{cusp})
	\label{temp1}
	\end{equation}
	and
	\begin{equation}
	\frac{\sqrt{-1}}{2} \iint_{\delta}^{\left|J_{1}^{(j)}\right|\delta} \partial_z \varphi \partial_{\bar{z}}  \varphi ~dz \wedge d\bar{z} =  2\pi (1-\frac{1}{m_j})^2 \log\left|J_{1}^{(j)}\right|.\hspace{.2cm}(\text{conical singularity})
	\label{temp2}
	\end{equation}
	By utilizing \eqref{temp1} and \eqref{temp2} and noting that
	\begin{equation} \iint_{\delta}^{\infty}  -\iint_{\delta}^{\left|J_{1}^{(j)}\right|\delta}=\iint_{\left|J_{1}^{(j)}\right|\delta}^{\infty},\nonumber
	\end{equation}
	the expression \eqref{Sreg2} for $\mathscr{S}_{\boldsymbol{m}}$ simplifies to
	\begin{multline}
	\mathscr{S}_{\boldsymbol{m}}[\varphi]  =\lim_{\delta \to 0} \left( \iint_{\singfund_{\tilde\delta}} \Big(
	|\partial_z \varphi|^2 + e^\varphi
	\Big) d^2z + \frac{\sqrt{-1}}{2} \sum_{k=2}^{g} \oint_{\mathcal{C}_k} \theta_{\gamma_k^{-1}}[\varphi]\right. \\ \left. +2\pi \sum_{j=1}^{n_e} \left(1-\frac{1}{m_j}\right)^2 \log \delta + 2\pi n_p \left(
	\log\delta + 2\log \left|\log \delta\right|
	\right)\right),
	\label{Sreg3}
	\end{multline}
	where
	\begin{equation}
	\singfund_{\tilde\delta} = \singfund\backslash \bigcup_{j=1}^n \Big{\{} z\Big|~|z-z_j| < \left|J_1^{(j)}\right| \delta\Big{\}}.\label{q2}
	\end{equation}
	It's important to highlight that in this language, the divergence from the kinetic term is exactly minus of counterterms in the second line of \eqref{Sreg3}. Based on equation \eqref{Repsilon} and noting that $\varepsilon \rightarrow \delta$ as $r \rightarrow 0$, we can deduce from \eqref{tempeq1} and \eqref{q2} in this limit that
	\begin{equation}
	\begin{aligned}
	V_\varepsilon[\varphi]-\frac12 A_\varepsilon[\varphi] - \pi \chi(X) \log \varepsilon = -\frac{1}{4}~ \lim_{\delta \to 0}\Big(
	\iint_{\singfund_{\tilde\delta}} |\partial_z\varphi|^2 ~d^2z 
	+ \frac{\sqrt{-1}}{2} \sum_{k=2}^{g} \oint_{\mathcal{C}_k} \theta_{\gamma^{-1}_k}[\varphi]
	\Big).
	\end{aligned}
	\label{renvolrefined}
	\end{equation}
	
	Finally, it should be noted that the kinetic term on the right-hand side of \eqref{renvolrefined} is divergent, and the appropriate counterterms from the second line of \eqref{Sreg3} need to be added. Consequently, the renormalized volume can be defined as follows:
	\begin{equation}
	V_{\text{ren}}\hspace{-1mm}=\hspace{-.5mm}\lim_{\varepsilon \to 0}\hspace{-1mm}\left(\hspace{-1mm} V_{\varepsilon}[\varphi]\hspace{-.5mm} -\hspace{-.5mm} \frac{1}{2} A_{\varepsilon}[\varphi]\hspace{-.5mm} -\hspace{-.5mm} \pi \chi(X) \log \varepsilon\hspace{-.5mm} - \hspace{-.5mm}\frac{\pi}{2} \sum_{j=1}^{n_e} (1 - \frac{1}{m_j})^2\hspace{.5mm} \log \varepsilon\hspace{-.5mm} -\hspace{-.5mm} \frac{\pi}{2} n_p (\log \varepsilon + 2 \log \left|\log \varepsilon\right|)\hspace{-1mm} \right),\label{Vren}
	\end{equation}
	which, when compared with the generalized Liouville action \eqref{Sreg3}, implies that\footnote{The term "area" here refers to the contribution from the expression $\iint_{\singfund_{\tilde\delta}}  e^\varphi  d^2z$, which is equal to $-2\pi\chi(X)$.}
	\begin{equation}
	V_{\text{ren}} = -\frac{1}{4} \check{\mathscr{S}}_{\boldsymbol{m}}[\varphi]=-\frac{1}{4} \Big(\mathscr{S}_{\boldsymbol{m}}[\varphi]-\text{area term}\Big).\label{Theorem1}
	\end{equation}
	This indicates that the generalized Liouville action  $\Gpotential_{\boldsymbol{m}}$  satisfies the holography principle, meaning it serves as the renormalized hyperbolic volume of a Schottky 3-manifold with lines of conical singularities and punctures. Moreover, we obtain an alternative proof that the  $\mathscr{S}_{\boldsymbol{m}}$ is independent of the choice of a fundamental domain $\singfund$ 
	of Schottky group $\Sigma$ 
	in $\singrigon$, as long as $\singfund$  is the boundary of a fundamental region of $\Sigma$ in 
	$R_{\varepsilon}\cup \singrigon $. Furthermore, as demonstrated in \cite{Taghavi2024classical}, the generalized Liouville action \eqref{Sreg3} serves as a Kähler potential for a specific combination of the Weil–Petersson and Takhtajan–Zograf metrics,\footnote{See also the second equation in \eqref{derivativesL}.}  which appear in the local index theorem for orbifold Riemann surfaces \cite{takhtajan2019local}. The result \eqref{Theorem1} indicates that $V_{\text{ren}}$ plays the same role.
	\section{Renormalized Volume and Polyakov Anomaly}\label{Section3}
	In this section, we show that, under conformal transformations, the variations in the function $\Gpotential_{\boldsymbol{m}}$ and its dual renormalized hyperbolic volume are equivalent to the Polyakov anomaly for orbifold Riemann surfaces, as previously derived in \cite{kalvin2021polyakov,aldana2020polyakov}. The method we used to establish this equivalence may provide an alternative approach to derive the renormalized Polyakov anomaly for Riemann surfaces with punctures (cusps), as described in \cite{albin2013ricci}, regardless of whether they have conical singularities.
	
	Here, we have assumed that all singular points except the last one, i.e., $z_n$, are located at a finite distance within the interior of the fundamental domain  $\SchottkyFund$. For sufficiently small $\delta >0$, define
	\begin{equation}
	\singfund_{\delta} = \singfund \big\backslash  \bigcup_{i=1}^{n-1}\left\{z \, \Big| \, |z-z_i|<\delta\right\} \cup \left\{z \, \Big| \, |z|>\delta^{-1}\right\}.
	\end{equation}
	Accordingly, the $S_{\singfund_{\text{reg}}}[\varphi]$ in \eqref{Sreg} changes to
	\begin{multline}\label{Sb}
	S_{\singfund_{\text{reg}}}[\varphi] = 
	\lim_{\delta \to 0^{+}} \left(\iint_{\singfund_{\delta}}(|\partial_z \varphi|^2 + e^{\varphi}) \dd^2{z}  + \frac{\sqrt{-1}}{2} \sum_{j=1}^{n_e} \left(1-\frac{1}{m_j}\right) \oint_{c_{j}^{\delta}} \varphi \left(\frac{\dd{\bar{z}}}{\bar{z}-\bar{z}_{j}} - \frac{\dd{z}}{z - z_{j}}\right)\right. \\ 
	\left. - 2\pi \sum_{j=1}^{n_e} \left(1-\frac{1}{m_j}\right)^2 \log\delta    + 2 \pi n_p \log\delta + 4 \pi (n_p-2) \log|\log\delta|\big) \right).
	\end{multline}
	By including the appropriate boundary term for a well-defined variational principle, even in the case of punctures, $S_{\singfund_{\text{reg}}}$ can be expressed as follows (see Remark 4.2 in \cite{Taghavi2024classical} for further details).
	\begin{multline}\label{FullNoGenusAction}
	S_{\singfund_{\text{reg}}}[\varphi] = \\
	\lim_{\delta \to 0} \left(\iint_{\singfund_{\delta}}(|\partial_z \varphi|^2 + e^{\varphi}) \dd^2{z}  + \underbrace{\frac{\sqrt{-1}}{2} \sum_{i=1}^{n} \left(1-\frac{1}{m_i}\right) \oint_{c_{i}^{\delta}} \varphi \left(\frac{\dd{\bar{z}}}{\bar{z}-\bar{z}_{i}} - \frac{\dd{z}}{z - z_{i}}\right)}_{B_1[\varphi]}\right. \\ 
	+\underbrace{\frac{\sqrt{-1}}{2} \sum_{j=n_e+1}^{n-1}  \oint_{c_{j}^{\delta}} \varphi \left(\frac{\dd{\bar{z}}}{(\bar{z}-\bar{z}_{j})\log|z-z_j|} - \frac{\dd{z}}{(z-z_j)\log|z-z_j|} \right)}_{B_2[\varphi]}  \\
	\left.\underbrace{-\frac{\sqrt{-1}}{2} \oint_{c_{n}^{\delta}} \varphi \left(\frac{\dd{\bar{z}}}{\bar{z}} - \frac{\dd{z}}{z}\right) -\frac{\sqrt{-1}}{2} \oint_{c_{n}^{\delta}} \varphi \left(\frac{\dd{\bar{z}}}{\bar{z}\log|z|} - \frac{\dd{z}}{z\log|z|}\right)}_{B_3[\varphi]} -2\pi \sum_{i=1}^{n} \left(1-\frac{1}{m_i}\right)^2 \log\delta 
	%- 2 \pi (n_p-2)\log|\log\epsilon|
	\right),
	\end{multline}
	where $m_i = \infty$ for $i=n_e+1,\dots,n$. Now, the regularized action becomes 
	\begin{equation}
	\begin{split}
	&S_{\boldsymbol{m}}[\varphi] = S_{\boldsymbol{m}}(\SchottkyFund;z_1,\dots,z_{n}) 
	= S_{\singfund_{\text{reg}}}[\varphi] + \frac{\sqrt{-1}}{2} \sum_{k=2}^{g} \oint_{\mathcal{C}_k} \theta_{\gamma_k^{-1}}[\varphi],
	\end{split}\label{regularizeLiouvilleaction2}
	\end{equation}
	where $S_{\singfund_{\text{reg}}}[\varphi]$ and the 1-form $\theta_{L_k^{-1}}(\varphi)$ are given in \eqref{FullNoGenusAction} and \eqref{chigm1}, respectively. 
	
	Now, under the conformal transformation $e^{\varphi}\dd z \dd{\bar{z}} \rightarrow e^{\varphi+\sigma}\dd z \dd{\bar{z}}$, the variation of Kinetic term $\check\omega[\varphi]$ in \eqref{omegacheck} is given by
	\begin{equation}\label{varcheckomega}
	\check\omega[\varphi +\sigma]  -\check\omega[\varphi] = \Big(\partial_z\sigma\partial_{\bar{z}} \sigma +  K[\varphi] ~e^{\varphi} \sigma\Big) \dd z\wedge \dd \bar{z} +\dd \tilde{\theta},
	\end{equation}
	where 
	\begin{equation}\label{thetat}
	\tilde{\theta} = \sigma \left(\partial_{\bar{z}}\varphi ~\dd \bar{z}-\partial_{z}\varphi~ \dd z\right) \overset{\eqref{chi1}}{=}\sigma \varkappa_{1}.
	\end{equation}
	Observing that $\delta(\varkappa_{1})_{\gamma^{-1}} = \varkappa_{1}.\gamma-\varkappa_1$, we get
	\begin{equation}\label{deltathetat}
	\delta\tilde{\theta}_{\gamma^{-1}} = \delta (\sigma \varkappa_{1}) =\sigma \delta\varkappa_{1}= \sigma \left(\frac{\gamma^{\prime\prime}}{\gamma^{\prime}} \dd w -\frac{\overline{\gamma^{\prime\prime}}}{\overline{\gamma^{\prime}}} \dd\bar{w}\right) \overset{\eqref{chigm1}}{=} \sigma \varkappa_{\gamma^{-1}} 
	\end{equation}
	According to the equation \eqref{chigm1}, we have
	\begin{equation}\label{vartheta}
	\theta_{\gamma_k^{-1}}[\varphi+\sigma]-\theta_{\gamma_k^{-1}}[\varphi] = \sigma \left(\frac{\gamma^{\prime\prime}}{\gamma^{\prime}} \dd w -\frac{\overline{\gamma^{\prime\prime}}}{\overline{\gamma^{\prime}}} \dd\bar{w}\right) \overset{\eqref{chigm1}}{=} \sigma \varkappa_{\gamma^{-1}} = \delta\tilde{\theta}_{\gamma^{-1}},
	\end{equation}
	where the last equality follows from equation \eqref{deltathetat}. Thus,
	\begin{multline}
	\stks{\check{\omega}[\varphi+\sigma]-\check{\omega}[\varphi],\singfund_{\delta}}-\stks{\theta[\varphi+\sigma]-\theta[\sigma],L} = \\ \iint_{\singfund_{\delta}}\Big(\partial_z \sigma\partial_{\bar{z}}\sigma +  K[\varphi] ~e^{\varphi} \sigma\Big) \dd z\wedge \dd \bar{z} +\sum_{j=1}^{n}\stks{\tilde{\theta}, c_{j}^{\delta}},\label{q3}
	\end{multline}
	where we have used the equations \eqref{varcheckomega},\eqref{vartheta}, and the fact that (see also \eqref{parD})
	\begin{equation}
	\stks{\dd \tilde\theta,\singfund_{\delta}} =\stks{\tilde\theta,\partial' \singfund_{\delta}} = \stks{\tilde\theta, \partial'' L+ \sum_{j=1}^{n}c_{j}^{\delta}} = \stks{\delta\tilde\theta,L}+\sum_{j=1}^{n}\stks{\tilde\theta,c_{j}^{\delta}}.
	\end{equation}
	Using the equation \eqref{thetat} along with the asymptotic form of the field $\varphi$ near the singularities (see equation \eqref{conicalasymp}), it is straightforward to observe that
	\begin{equation}
	\sum_{j=1}^{n}\stks{\tilde\theta,c_{j}^{\delta}} = - \Big(B_{1}[\varphi+\sigma]-B_{1}[\varphi]\Big)- \Big(B_{2}[\varphi+\sigma]-B_{2}[\varphi]\Big)- \Big(B_{3}[\varphi+\sigma]-B_{3}[\varphi]\Big).\label{q4}
	\end{equation}
	Therefore, according to \eqref{q3} with \eqref{q4} and for $\check S_{\boldsymbol{m}}[\varphi]=S_{\boldsymbol{m}}[\varphi]-\text{area term}$ (see equation \eqref{regularizeLiouvilleaction2}), we have
	\begin{equation}
	\check S_{\boldsymbol{m}}[\varphi+\sigma]- \check S_{\boldsymbol{m}}[\varphi] = \iint_{\singfund_{\delta}}\Big(\partial_z \sigma\partial_{\bar{z}}\sigma +  K[\varphi] ~e^{\varphi} \sigma\Big) \dd^{2}z.\label{q5}
	\end{equation}
	Moreover, in this case, the function $H$ in equation \eqref{renvol} is modified to
	\begin{equation}
	\mathsf{H}[\varphi] = \Lponetial_{1}^{m_1h_1}\dotsm\Lponetial_{n_e}^{m_{n_e} h_{n_e}}\Lponetial_{n_e+1}\dotsm\Lponetial_{n-1}\Lponetial^{-1}_{n},\label{H2}
	\end{equation}
	where (see Remark 3.10 in \cite{Taghavi2024classical})
	\begin{equation}\label{loghj}
	\left\{
	\begin{split}
	& \log \Lponetial_{j} = -2 \log m_j + 2 \log 2 - \lim_{z \to z_{j}}\left(\varphi(z) + \left(1-\frac{1}{m_j}\right) \log|z-z_{j}|^2\right), \hspace{.2cm} j=1,\dots,n_e,\\ \\
	&\log \Lponetial_{j} = \lim_{z \to z_{i}} \left(\log|z-z_{j}|^2 - \frac{2 e^{-\frac{\varphi(z)}{2}}}{|z-z_{j}|}\right),\hspace{2cm}  j=n_e+1,\dots,n-1,\\
	& \log \Lponetial_{n} = \lim_{z \to \infty} \left(\log|z|^2 - \frac{2 e^{-\frac{\varphi(z)}{2}}}{|z|}\right).
	\end{split}
	\right.
	\end{equation}
	with $\Lponetial_j = \left| J^{(j)}_{1} \right|^{\frac{2}{m_j}}$ for $j=1,\dots,n_e$,  $\Lponetial_{j} =\left| J^{(j)}_{1} \right|^{2}$ for $j=n_e+1,\dots,n-1$ and $\Lponetial_{n}= \left| J^{(n)}_{-1} \right|^2$ for $j=n$.
	
	Let us first consider the case where there are no punctures (i.e., $n_p=0$). It is easy to see that
	\begin{equation}
	\pi \log \mathsf{H}[\varphi+\sigma]-\pi \log \mathsf{H}[\varphi] = -\pi \sum_{j=1}^{n_e} m_j h_j~ \sigma(z_j).\label{q6}
	\end{equation}
	Thus, according to \eqref{q5} with \eqref{q6} and for $\check{\mathscr{S}}_{\boldsymbol{m}}$ in \eqref{Theorem1} (see also equation \eqref{renvol}), we get
	\begin{equation}
	\check{\mathscr{S}}_{\boldsymbol{m}}[\varphi+\sigma]-\check{\mathscr{S}}_{\boldsymbol{m}}[\varphi] = \iint_{\singfund_{\delta}}\Big(\partial_z \sigma\partial_{\bar{z}}\sigma +  K[\varphi] ~e^{\varphi} \sigma\Big) \dd^{2}z +\pi \sum_{j=1}^{n_e} m_j h_j~ \sigma(z_j).\label{q7}
	\end{equation}
	According to Corollary 1.3.1 in \cite{kalvin2021polyakov} or Theorem 1.3 in \cite{aldana2020polyakov} for the metric $\tilde{g}= e^{2\varphi_{0}}g$, the Polyakov anomaly formula is given by\footnote{In this context, $\Delta_{g}$ refers to the Friedrichs extension of the Laplacian associated with the Riemannian metric $g$.}
	\begin{multline}
	\log\frac{\text{det}(\Delta_{\tilde{g}})}{A_{\tilde{g}}}- \log\frac{\text{det}(\Delta_{g})}{A_{g}}\\
	=-\frac{1}{12\pi}\Big(\iint_{M}\left(|\nabla_{g}\varphi_0|^2 +  R_{g}\hspace{.5mm}\varphi_0\right) dA_g+ \sum_{j=1}^{n_
		e}\varphi_0(z_j)\frac{(2\pi)^2-\gamma_{j}^2}{\gamma_j}\Big).\label{p1}
	\end{multline}
	By choosing the metric $g= e^{2\tilde{\varphi}}\eta$ and $\eta= \dd z \dd{\bar{z}}$, and noting that
	\begin{equation}
	\begin{split}
	&\Delta_{\eta}\tilde{\varphi} = 4\partial_{z}\partial_{\bar{z}}\tilde{\varphi},\\&
	R_{g} = -2e^{-2\tilde{\varphi}}\Delta_{\eta}\tilde{\varphi} = -8e^{-2\tilde{\varphi}}\partial_{z}\partial_{\bar{z}}\tilde{\varphi},\\&
	|\nabla_{g}\varphi_0|^2 = (\partial_{\mu}\varphi_0)(\partial^{\mu}\varphi_0) =  4 e^{-2\tilde{\varphi}}\partial_{z}\varphi_0\partial_{\bar{z}}\varphi_0,\\&
	dA_g = e^{2\tilde{\varphi}}\dd z\dd\bar{z},
	\end{split}
	\end{equation}
	the Polyakov anomaly \eqref{p1} is simplified to
	\begin{multline}
	\log\frac{\text{det}(\Delta_{\tilde{g}})}{A_{\tilde{g}}}- \log\frac{\text{det}(\Delta_{g})}{A_{g}}\\
	=-\frac{1}{12\pi}\left(\iint_{M}\Big(4 \partial_{z}\varphi_0\partial_{\bar{z}}\varphi_0   -8\partial_{z}\partial_{\bar{z}}\tilde{\varphi} ~\varphi_0\Big) d^{2}z+ \sum_{j=1}^{n_
		e}\varphi_0(z_j)\frac{(2\pi)^2-\gamma_{j}^2}{\gamma_j}\right).\label{p2}
	\end{multline}
	By defining 
	\begin{equation}
	P[\tilde\varphi] =\log \frac{\text{det}(\Delta_{g})}{A_{g}},
	\end{equation} 
	and noting that in our notation $\varphi_0 = \sigma/2, \tilde{\varphi}=\varphi/2$, $\gamma_{j}= 2\pi/m_j$ and $M\equiv \singfund_{\delta}$, the equation \eqref{p2} can be rewritten as follows
	\begin{equation}
	P[\varphi+\sigma]-P[\varphi]\\
	=-\frac{1}{12\pi}\left(\iint_{\singfund_{\delta}}\Big( \partial_{z}\sigma\partial_{\bar{z}}\sigma   +K[\varphi] e^{\varphi}\sigma\Big) d^{2}z+ \pi\sum_{j=1}^{n_
		e}m_j h_j\sigma(z_j)\right).\label{p3}
	\end{equation}
	Comparing it with equation \eqref{q7} we get
	\begin{equation}
	P[\varphi+\sigma]+\frac{1}{12\pi}\check{\mathscr{S}}_{\boldsymbol{m}}[\varphi+\sigma]=P[\varphi]  +\frac{1}{12\pi}\check{\mathscr{S}}_{\boldsymbol{m}}[\varphi]\label{Theorem2}.
	\end{equation}
	Additionally, this comparison with equation \eqref{Theorem1} implies
	\begin{equation}
	V_{\text{ren}}[\varphi+\sigma]-3\pi \hspace{.5mm}P[\varphi+\sigma]=V_{\text{ren}}[\varphi]-3\pi\hspace{.5mm}P[\varphi].\label{Theorem2p}
	\end{equation}
	Thus, the renormalized volume $V_{\text{ren}}$ follows a Polyakov formula and acts as an analog to the determinant of the Laplacian, serving as an action on the conformal class of boundary metrics, with critical points occurring at constant curvature metrics. It is important to recall that for the case of a compact Riemann surface with genus greater than one, Takhtajan and Teo in \cite{Takhtajan_2003} derived the following equation:\footnote{For an extension to higher dimensions see \cite{guillarmou2018renormalized}.}
	\begin{equation}
	P[\varphi+\sigma]+\frac{1}{12\pi}\check S[{\varphi+\sigma}]=P[\varphi]  +\frac{1}{12\pi}\check S[\varphi]\label{TakhtajanTeo},
	\end{equation}
	where
	\begin{equation}
	\check S[\varphi] = \iint_{\SchottkyFund} |\partial_z \varphi|^2 \dd^2{z}+ \frac{\sqrt{-1}}{2} \sum_{k=2}^{g} \oint_{\mathcal{C}_k} \theta_{\gamma_k^{-1}}[\varphi],
	\end{equation}
	and the 1-form $\theta_{L_k^{-1}}(\varphi)$ is given in \eqref{chigm1}. By comparing equation \eqref{TakhtajanTeo} with our result in equation \eqref{Theorem2}, it becomes clear that to extend the findings of Takhtajan and Teo for compact Riemann surfaces to orbifold Riemann surfaces, one must replace $\check S[\varphi]$, the Liouville action without the area term on Schottky space $\schottky_{g}$, with the corresponding function on $\schottky_{g,n}(\boldsymbol{m})$. This function is $\check{\mathscr{S}}_{\boldsymbol{m}}[\varphi]$, the generalized Liouville action $\Gpotential_{\boldsymbol{m}}[\varphi]$, introduced in \cite{Taghavi2024classical}, also without the area term.
	
	Now, let us proceed to extend the analysis to include the case with punctures. According to equation \eqref{loghj}, the contribution of punctures to the function $H$ in \eqref{H2} is given by
	\begin{equation}
	2\pi\sum_{j=n_e+1}^{n-1}\lim_{z\rightarrow z_j}\left(1-e^{-\sigma/2}\right)\log \left|\frac{z-z_j}{J_{1}^{(j)}}\right|-2\pi\lim_{z\rightarrow\infty}\left(1-e^{-\sigma/2}\right)\log\left|\frac{z}{J_{-1}^{(n)}}\right|,
	\end{equation}
	which implies that for $|z-z_j|=\epsilon\rightarrow 0$,
	\begin{equation}
	\pi \log \mathsf{H}[\varphi+\sigma]-\pi \log \mathsf{H}[\varphi] = -\pi \sum_{j=1}^{n_e} m_j h_j~ \sigma(z_j)+2\pi\sum_{j=n_e+1}^{n}\left(1-e^{-\sigma(z_j)/2}\right)\log \epsilon.\label{H3}
	\end{equation}
	In the equation above, we applied a change of variables so that the puncture originally at infinity is now located at $z_n=0$.\footnote{In the case where all punctures are at finite distances, the last term in \eqref{H3} remains unchanged.}  Assuming that equation \eqref{Theorem2} remains valid even in the presence of punctures, we define the renormalized Polyakov anomaly according to \eqref{H3} as follows\footnote{Where $\epsilon= |z-z_j|$.}
	\begin{equation}
	\Big(P[\varphi+\sigma]-P[\varphi]\Big)_{\text{ren}}\hspace{-.2cm}=\Big(P[\varphi+\sigma]-P[\varphi]\Big)
	-\frac{1}{6}\lim_{\epsilon\rightarrow 0}\sum_{j=n_e+1}^{n}\left(1-e^{-\sigma(z_j)/2}\right)\log \epsilon,
	\end{equation}
	which leads to:
	\begin{equation}
	\Big(P[\varphi+\sigma]-P[\varphi]\Big)_{\text{ren}}\hspace{-.2cm}= -\frac{1}{12\pi}\left(\iint_{\singfund_{\delta}}\Big( \partial_{z}\sigma\partial_{\bar{z}}\sigma   -K[\varphi] e^{\varphi}\sigma\Big) d^{2}z+ \pi\sum_{j=1}^{n_e}m_j h_j\hspace{.5mm}\sigma(z_j)\right).\label{p4}
	\end{equation}
	Unlike conical singularities, where additional finite terms appear in the  Polyakov anomaly compared to the compact case, punctures (cusps) do not introduce any new finite terms. This finding aligns with Theorem 2.9, particularly equation (2.17), from the study by Albin, Aldana, and Rochon in \cite{albin2013ricci}.\footnote{See also \cite{aldana2013asymptotics}.} More importantly, this suggests that transitioning from a conical singularity to a cusp (as the cone angle approaches zero) does not simply yield the same renormalized determinant as the cusp case. Notably, in the compact scenario, when we pinch a geodesic to create cusp ends, a result from Wolpert \cite{wolpert1987asymptotics} indicates that the determinant diverges in this limit, tending to infinity.\footnote{We appreciate Frederic Rochon for a discussion regarding this matter.}
	\section{Conclusion}\label{Conclusion}
	In this paper, we established the holography principle, demonstrating a precise connection between the function (generalized Liouville action) $\Gpotential_{\boldsymbol{m}}$, introduced in \cite{Taghavi2024classical} for orbifold Riemann surfaces and the renormalized hyperbolic volume of the associated Schottky 3-orbifolds. Let $X \cong \singrigon \slash \Sigma = [\UHP \slash \Gamma]$ and $\Gamma$ be respectively an orbifold Riemann surface and Fuchsian group of signature  $(g;m_1,\dots,m_{n_e};n_p)$ and let $J : \UHP \to \singrigon$ be the corresponding orbifold covering map. As it is explained in Section 5.2  of \cite{Taghavi2024classical}, the automorphic form $\text{Sch}\left(J^{-1};z\right)$ of weight four for the Schottky group $\Sigma$ can be projected to subspace of its meromorphic quadratic differential  $\Hilbert^{2,0}(\singrigon,\Sigma)$ with basis element $P_i$ as follows,
	\begin{equation*}\label{Rquaddiffexpansion}
	\mathsf{R}(z) = \sum_{i=1}^{3g-3+n} \mathrm{b}_i\hspace{.5mm} P_i(z) = \sum_{i=1}^{3g-3+n} \left( \text{Sch}\left(J^{-1}\right), M_i \right) P_i(z),
	\end{equation*}
	where $M_i$ is basis element for harmonic differentials in $\Hilbert^{-1,1}(\singrigon,\Sigma)$. The $\mathsf{R}(z)$ coincides with a $(1,0)$-form $\mathscr{Q}$ on the Schottky pace $\schottky_{g,n}(\boldsymbol{m})$,\hspace{1mm}
	\begin{equation*}\label{D}
	\begin{split}
	\mathscr{Q}= \sum_{i=1}^{3g-3+n}\mathrm{b}_i \hspace{.5mm}\dd z_i & = \mathrm{b}_1\hspace{.5mm}\dd\lambda_1+\dots +\mathrm{b}_g \hspace{.5mm}\dd\lambda_g+\mathrm{b}_{g+1}\hspace{.5mm}\dd a_3+\dots+\mathrm{b}_{2g-3}\hspace{.5mm}\dd a_g\\
	&+\mathrm{b}_{2g-2}\hspace{.5mm}\dd b_2+\dots+\mathrm{b}_{3g-3}\hspace{.5mm}\dd b_g+\mathrm{b}_{3g-2}\hspace{.5mm}\dd z_1+\dots+\mathrm{b}_{3g-3+n}\hspace{.5mm}\dd z_n.
	\end{split}
	\end{equation*}
	According to Theorem 1 and Theorem 2 in \cite{Taghavi2024classical}, the function $\Gpotential_{\boldsymbol{m}}$ satisfies the following equations:
	\begin{equation}
	\begin{split}
	&\partial \Gpotential_{\boldsymbol{m}} = 2 \mathscr{Q},\\&\bar{\partial} \partial \Gpotential_{\boldsymbol{m}} = 2\sqrt{-1}\left(\omega_{\text{WP}}-\frac{4\pi^2}{3} \omega^{\text{\text{cusp}}}_{\text{TZ}}- \frac{\pi}{2} \sum_{j=1}^{n_e} m_j h_j \hspace{.5mm}\omega^{\text{\text{ell}}}_{\text{TZ},j}\right),\label{derivativesL}
	\end{split}
	\end{equation}
	where $\omega_{\text{WP}}$ and $(\omega^{\text{\text{cusp}}}_{\text{TZ}},\omega^{\text{\text{ell}}}_{\text{TZ}})$ are Weil--Petersson and Takhtajan--Zograf metrics, respectively. Given that the  $\Gpotential_{\boldsymbol{m}}$ is linked to the renormalized volume, or equivalently, to the renormalized Einstein-Hilbert action with  particles, it is important to examine the significance of its first and second derivatives in equation \eqref{derivativesL}. These derivatives could correspond to key physical concepts in the bulk, such as black hole thermodynamics, intensive and extensive quantities, stability of black hole solutions, phase transitions between different gravitational configurations, and counting microscopic states of the black hole (which leads to its entropy). Additionally, they contribute to the broader aim of quantum gravity, which seeks to uncover the quantum structure of spacetime and the nature of black hole singularities.

	We have also shown that the relationship between changes in the logarithm of the determinant of the Laplacian and the Liouville action under conformal transformations, as observed by Takhtajan and Teo in \cite{Takhtajan_2003} for compact Riemann surfaces, can be extended to orbifold Riemann surfaces using the  $\Gpotential_{\boldsymbol{m}}$. As discussed in the introduction, these relationships are two sides of the same coin—namely, uniformization theory. However, they also reveal deeper insights. It is important to highlight that surfaces with different shapes or geometries can have identical Laplace spectra (yielding the same value for $-\log\det\Delta$), known as isospectral surfaces.\footnote{An example of isospectral surfaces with hyperbolic metrics can be found in the classical constructions by Buser, Conway, Doyle, and Semmler \cite{buser2010some}, who provided examples of isospectral but non-isometric surfaces. See also \cite{vigneras1980varietes,sunada1985riemannian,gordon1992one}.} A natural question that arises is how these surfaces are understood through the lens of the  $\Gpotential_{\boldsymbol{m}}$. To explore this, let us consider two representatives of the set $\Sigma.\{z_1,...,z_n\}$. By choosing different generators for the Schottky group, one can obtain distinct fundamental domains that, although topologically identical, describe different conformal structures on the same Riemann surface. These variations are linked to distinct points on the moduli (or Schottky) space. Now, consider how the function $\Gpotential_{\boldsymbol{m}}$ behaves in these cases. According to equation \eqref{Sreg3}, the effective fundamental domain for calculating  $\Gpotential_{\boldsymbol{m}}$ is given by
	\begin{equation}
	\singfund_{\tilde\delta} = \singfund\backslash \bigcup_{j=1}^n \Big{\{} z\Big|~|z-z_j| < \left|J_1^{(j)}\right| \delta\Big{\}}.\label{D1}\nonumber
	\end{equation}
	When a generator $L_k\in\Sigma$ acts on a singular point $z_i$
	within a specific fundamental domain $\singfund$, it maps the point to another singularity with the same ramification index but in a different domain. Although the fundamental domain $\singfund$ changes, the radius $\tilde\delta$ used in calculating the  $\Gpotential_{\boldsymbol{m}}$ remains invariant. This invariance can be seen easily by noting that under $z_j \rightarrow L_k(z_j)$ we have (see \cite{Taghavi2024classical} for more details)
	\begin{equation}
	\delta \rightarrow \delta/\left|L'_k(z_j)\right|,\hspace{1cm}\left|J_{1}^{(j)}\right| \rightarrow \left|J_1^{j}\right| \left|L'_k(z_j)\right|.\nonumber
	\end{equation}
	Since  $\Gpotential_{\boldsymbol{m}}$ is connected to the spectrum of the operator  $-\log\det\Delta$, this implies that the spectrum remains unchanged when different representatives of $\Sigma.\{z_1,...,z_n\}$ are chosen. In other words, the orbifold Riemann surfaces corresponding to different representatives are isospectral. Additionally, this indicates that these orbifold Riemann surfaces share the same lengths of closed geodesics since it has been demonstrated that in \cite{doyle2006isospectral}, and more broadly through the Selberg Trace Formula,  that two-dimensional hyperbolic surfaces, including both manifolds and orbifolds, that are isospectral have matching geodesics.
	% Although the function $-\log\det\Delta$ does not capture the conformal differences between these isospectral surfaces (but non-isometric surfaces), the function $\Gpotential_{\boldsymbol{m}}$ can reveal their geometric and conformal distinctions, reflecting their distinct positions in the moduli space and their underlying geometry. Since the function $\Gpotential_{\boldsymbol{m}}$ is linked to bulk geometries through the renormalized volume, its sensitivity becomes essential. It aids in distinguishing truly distinct geometries behind the horizons, thus preventing the overcounting of states in the calculation of black hole entropy.\footnote{The concept of interpreting the Liouville action function as a measure of (computational) complexity is also employed by Caputa, Kundu, Miyaji, Takayanagi, and Watanabe in \cite{caputa2017anti}.}
	
	It is also useful to provide a remark on the various logarithmic divergent terms in the definition of $V_{\text{ren}}$ in equation \eqref{Vren}. For a CFT with an energy-momentum tensor $T_{\mu\nu}$ and central charge $c$ on an orbifold Riemann surface, the conformal anomaly remains proportional to the Euler characteristic of the surface, expressed as:
	\begin{equation}
	\mathcal{A}=\int_{X} \left\langle T_{\mu}^{\mu}\right\rangle \dd A = -\frac{c}{24\pi}\hspace{1mm}.\hspace{1mm} 4\pi\chi(X).\label{CA1}
	\end{equation}
	For a CFT dual to the bulk considered in this paper, with $c = 3/2G_N$, this conformal anomaly \eqref{CA1} simplifies to:
	\begin{equation}
	\mathcal{A} = -\frac{1}{4G_N} \chi(X).\label{CA2}
	\end{equation}
	Moreover, the coefficient of the $\log\varepsilon$ term in on-shell Einstein-Hilbert action is (see also \cite{henningson1998holographic, krasnov2000holography}):
	\begin{equation}
	\frac{1}{4\pi G_{N}} V_{\varepsilon}\bigg{|}_{\log{\varepsilon}}=\frac{1}{4\pi G_{N}}\hspace{1mm}.\hspace{1mm}\pi\chi(X)=\frac{1}{4G_N}\chi(X), 
	\end{equation}
	where equation \eqref{Veps} is used. This indicates that the term proportional to the Euler characteristic $\chi(X)$ in $V_{\text{ren}}$, as seen in result \eqref{Vren}, corresponds to the conformal anomaly for orbifold Riemann surfaces. Additionally, since
	\begin{equation} 
	\chi(X)=(-1/2\pi) \iint_{\singfund_{\tilde\delta}}  e^\varphi  d^2z,\nonumber
	\end{equation}
	the conformal anomaly \eqref{CA2} can be connected to the following counterterm:
	\begin{equation}
	S_{\text{ct}}= \frac{1}{8\pi G_N} \iint_{\singfund_{\tilde\delta}}  e^\varphi  d^2z\hspace{1mm} \log\varepsilon.\label{Sct}
	\end{equation} 
	This is where the first logarithmic term, containing the Euler characteristic in the definition of $V_{\text{ren}}$, conceptually differs from the other two logarithmic divergent terms, which are field-independent. The counterterm \eqref{Sct} is field-dependent because the anomaly arises from changes in the underlying geometry or field configuration under a conformal transformation, which is captured by the field $\varphi$. In contrast, field-independent terms, which do not depend on the conformal factor of the metric, are constants and do not influence the specific way the theory changes under conformal transformations.

Furthermore, the renormalization procedure used in this manuscript involves the uniformization of surfaces at the conformal boundary. Therefore, the renormalized volume \eqref{Theorem1} is a Schottky invariant quantity. A superficial analogy exists between the mass of asymptotically Euclidean manifolds and $V_{\text{ren}}$. Similar to the positive mass conjecture, one might ask whether $V_{\text{ren}}$ is positive for all hyperbolic Schottky manifolds with lines of conical singularities or whether any bounds exist on it.
	
Lastly, consider deformations (i.e., changes in the conformal structure) of a smooth cutoff surface $f=\varepsilon$ within a three-dimensional manifold $M$. In the compact case, it was shown in \cite{rivin1999schlafli} that there exists a relationship between the variations $dV$, $dH$, and $dI$ of the volume bounded by $f$, mean curvature, and induced metric on  $f$. This relationship leads to the  Schl{\"a}fli formula for polyhedron. To recall, in three-dimensional space, the  Schl{\"a}fli formula describes how the volume 
	$V$ changes with respect to infinitesimal variations in the polyhedron’s dihedral angles $\theta_i$
	\begin{equation}
	d V=-\frac{1}{2}\sum_{k} l_k\hspace{1mm} d\theta_k, \label{Schlafli}
	\end{equation}
	where $l_k$ denotes the length of the
	$k$-th edge of the polyhedron, $\theta_k$ is the dihedral angle at that edge, and the summation is over all edges of the polyhedron. A Polyhedron with the identification of some faces can model a three-dimensional space containing lines of conical singularities. Consequently, by utilizing the connection between the renormalized volume $V_{\text{ren}}$
	and the function $\Gpotential_{\boldsymbol{m}}$ in \eqref{Theorem1}, along with the first equation in \eqref{derivativesL}, one can extend the results of \cite{rivin1999schlafli} to account for lines of conical singularities. This allows for deriving an extended version of the Schl{\"a}fli formula for polyhedra with edge singularities. In this manner, we gain a more geometric understanding of the variations in the renormalized volume and its connection to classical thermal entropy and quantum information theoretic quantities, such as holographic complexity.

	\noindent {\bf Acknowledgments}:

	The authors would like to thank Colin Guillarmou, Victor Kalvin and Frederic Rochon for inspiring discussions. B.T. would like to thank CERN theory department for hospitality where part of this work has been carried out. The research of A.N. and B.T. is supported by the Iranian National Science Foundation (INSF) under Grant No.~4001859.

	\appendix
	\section{Asymptotic form of $\varphi$ near punctures and conical points}\label{asymapp}
	For an orbifold with conical points and punctures labeled by $n=1,\dots n_e$, and $n=n_e+1,\dots, n= n_e+n_p$, and located at $z=z_i$ respectively, we have the following asymptotic expansions for Liouville field and its derivative (see Appendix.C in \cite{Taghavi2024classical} for more details). Notice that the last puncture point is fixed to be at $\infty$.
	
	\noindent For $i=1, \ldots, n_e$ and $j=n_e+1, \ldots, n-1$,
	\begin{equation}
	\begin{aligned}
	\varphi(z)= \begin{cases}-2\left(1-\frac{1}{m_i}\right) \log \left|z-z_i\right|+\log \frac{4\left|J_1^{(i)}\right|^{-\frac{2}{m_i}}}{m_i^2}+\smallO(1) & z \rightarrow z_i \\ -2 \log \left|z-z_j\right|-2 \log |\log | \frac{z-z_j}{J_1^{(j)}}||+\smallO(1) & z \rightarrow z_j, \\ -2 \log |z|-2 \log \log \left|\frac{z}{J_{-1}^{(n)}}\right|+\smallO\left(|z|^{-1}\right), & z \rightarrow \infty\end{cases}
	\end{aligned}
	\label{conicalasymp}
	\end{equation}
	For $i=1, \ldots, n_e$ and $j=n_e+1, \ldots, n-1$,
	\begin{equation}
	\begin{aligned}
	\partial_z \varphi(z)= \begin{cases}-\frac{1-\frac{1}{m_i}}{z-z_i}+\frac{c_i}{1-\frac{1}{m_i}}+\smallO(1) & z \rightarrow z_i, \\ -\frac{1}{z-z_j}\left(1+\left(\log \left|\frac{z-z_j}{J_1^{(j)}}\right|\right)^{-1}\right)+c_j+\smallO(1) & z \rightarrow z_j \\ -\frac{1}{z}\left(1+\left(\log \left|\frac{z}{J_{-1}^{(n)}}\right|\right)^{-1}\right)-\frac{c_n}{z^2}+\smallO\left(\frac{1}{|z|^2}\right), & z \rightarrow \infty\end{cases}
	\end{aligned}
	\label{derasymp}
	\end{equation}
	where $c_j$ are accessory parameters. Moreover, the $J_{1}^{(i)}$ and $J_{1}^{(j)}$  are the first coefficients in the following expansion of Hauptmodule $J$ in a neighborhood of each elliptic point with ramification index $m_i$ at $z=z_i$ and each puncture at  $z=z_j$, respectively, 
	\begin{equation*}
	\begin{aligned}
	J(z) =z_i + \sum_{k=1}^{\infty} J_k^{(i)} \Big(
	\frac{z-z_i}{z-\bar{z_i}}
	\Big),
	\end{aligned}
	\end{equation*}
	and
	\begin{equation}
	\begin{aligned}
	J(z) = z_j + \sum_{k=1}^{\infty} J_k^{(j)} \exp \Big(
	-\frac{2\pi \sqrt{-1} k}{|\delta_j| (z-z_j)}
	\Big).\label{Jparabolic1}
	\end{aligned}
	\end{equation}
	Additionally, in a neighborhood of the puncture at infinity, $z_n = \infty$, the $J_{-1}^{(n)}$ appears in the following expansion:
	\begin{equation}
	\begin{aligned}
	J(z) = \sum_{k=-1}^{\infty}
	J_k^{(n)} \exp\Big(
	\frac{2\pi\sqrt{-1}kz}{|\delta_n|}
	\Big).\label{Jparabolic2}
	\end{aligned}
	\end{equation}
	In equations \eqref{Jparabolic1} and \eqref{Jparabolic2}, the $\delta_{n_e+1}, \dots, \delta_{n} \in \mathbb{R}$ are called translation lengths of the associated parabolic generators.
	\section{Schottky uniformization of orbifold Riemann surfaces}\label{Schotkkyreview}
	In this appendix, we review some important facts about the Schottky uniformization of orbifold Riemann surfaces (see \cite{Taghavi2024classical} for more details). We'll begin firstly by recalling how a compact Riemann surface of genus $g\geq2$ is uniformized by a Schottky group and then extend it to the orbifold Riemann surface. Let's start with some well-known definitions.   
	
	A Schottky group $\Sigma$ with the limit set  $\Lambda$ is a discrete subgroup of M\"{o}bius group $\PSLC$ that acts properly on the region of discontinuity $\Omega = \hat{\cmpx}\backslash\Lambda$ of the Riemann sphere $\hat{\cmpx}$.\footnote{More precisely, it is strictly a loxodromic Kleinian group, which is also free and finitely generated \cite{Maskit_1967}.} Additionally, a Schottky group, $\Sigma$ of rank $g$, is referred to as marked when a relation-free set of generators $L_1, \dots, L_g  \in \PSLC$ is selected. A key result is that for every marked Schottky group $(\Sigma;L_1,\dots,L_g)$, there exists a fundamental domain 
	$\SchottkyFund$.\footnote{However, it is important to note that this fundamental domain $\SchottkyFund$ is not uniquely determined by the chosen marking of the Schottky group $\Sigma$.} This domain is a connected region in $\hat{\cmpx}$, bounded by $2g$ disjoint Jordan curves $\mathcal{C}_1,...\mathcal{C}_g,\mathcal{C}'_{1},...,\mathcal{C}'_g$, where 
	$\mathcal{C}'_i= -L_i(\mathcal{C}_i)$ for $i=1,...,g$. The orientations of $\mathcal{C}_i$
	and $\mathcal{C}'_i$ are opposite and they correspond to components of $\partial\SchottkyFund$. Accordingly, a compact Riemann surface can be constructed by $\Omega\slash\Sigma$. There is also a concept of equivalence between two marked Schottky groups: 
	$(\Sigma; L_1, \dots, L_g)$ is considered equivalent to 
	$(\tilde{\Sigma}; \tilde{L}_1, \dots, \tilde{L}_g)$ if there exists a Mobius transformation 
	$\varsigma \in \PSLC$ such that 
	$\tilde{L}_i = \varsigma  L_i \varsigma^{-1}$ for all $i=1,\dots,g$.  The collection of equivalence classes of marked Schottky groups of genus $g$ is known as the Schottky space of genus $g$ and is denoted by $\schottky_{g}$. The standard representation for each $L_i$ is given by the equation
	\begin{equation}
	\bigg(L_i (z) - a_i\bigg)(z-b_i) = \lambda_i\hspace{1mm}\bigg(L_i (z) - b_i\bigg)(z-a_i), \qquad z \in \hat{\cmpx},
	\end{equation}
	where $a_i$ and $b_i$ represent the attracting and repelling fixed points of the loxodromic element $L_i$ and $0 < |\lambda_i| < 1$ is the associated multiplier. Furthermore, using the notion of equivalence, one can set the attracting fixed points of generators $L_1$ and $L_2$ along with the repelling fixed-point of generator $L_1$,  to 0, 1, and $\infty$ respectively, thereby defining the normalized marked Schottky group.
	
	In the presence of cusps and conical singularities, 
	one can subtract from $\Omega$ the pre-images of cusps by the covering map $\Omega \to \hat{X}_O$ (where $\hat{X}_O$ is compactified underlying Riemann surface) to get another planar region $\Omega_0$. The region  $\Omega_0$ provides the uniformization for the underlying Riemann surface  $X_O$ and lets $\pi_{0}$ denote the corresponding covering map $\Omega_0 \to X_O \cong \Omega_0/\Sigma$. Next, we can lift the branch divisor $\brdiv$
	using the covering map $\pi_{0}$, resulting in a new branch divisor
	\begin{equation}
	\widetilde{\brdiv} := \sum_{z_i \in \pi_{0}^{-1}(\sing_{\curlywedge}(O))} \left(1-\frac{1}{\nu\left(\pi_{\Sigma}(z_i)\right)}\right) z_i,
	\end{equation}
	which resides in the planar region $\Omega_0$. Here, $\sing_{\curlywedge}(O)$ is the set of singular points of finite order for an orbifold $O$ and $\nu$ is branching function which assigns to each singular point its corresponding branching order, i.e. $\nu: \sing_{\curlywedge}(O) \to \orderrange :=\big(\mathbb{N}\backslash\{1\}\big) $.  Then, the pair $(\Omega_0,\widetilde{\brdiv})$ will define a planar Riemann orbisurface $\singrigon$ such that $\pi_{\Sigma}:\singrigon \to O \cong \singrigon/\Sigma$ serves as an orbifold covering map (see \cite{Wong-1971} for more details). 
	Now, we define $\singfund$ as the pair $(\SchottkyFund_0 , \widetilde{\brdiv}|_{_{\SchottkyFund}})$, where $\SchottkyFund_0$ is given by $\SchottkyFund \cap \Omega_0$, and $\widetilde{\brdiv}|_{_{\SchottkyFund}}$ represents the restriction of $\widetilde{\brdiv}$ to $\SchottkyFund$. Accordingly, the regularized singular fundamental domain of a Schottky group is defined by\footnote{Note that the restriction of $\pi_0$ to $\Omega^{\text{reg}} := \Omega_0 \backslash \operatorname{Supp}\widetilde{\brdiv}$ endows $X_O^{\text{reg}}$ with the Schottky global coordinate, such that the space of singular conformal metrics  $\mathscr{CM}(O)$ consisting of functions $\varphi$, satisfying the condition
		\begin{equation}\label{Schottkycovariant}
		\varphi \circ \gamma + \log|\gamma'|^2 = \varphi \quad \text{for all} \quad \gamma \in \Sigma,\nonumber
		\end{equation}
		and representing $\widetilde{\brdiv}$.}
	\begin{equation}
	\singfund_{\delta} = \singfund \big\backslash  \bigcup_{j=1}^{n}\left\{z \, \Big| \, |z-z_i|<\delta\right\}.
	\end{equation} 
	In this way, the generalized Schottky space $\schottky_{g,n}(\boldsymbol{m})$ of Riemann orbisurface $O$, with or without punctures, is viewed as a holomorphic fibration  $\jmath : \schottky_{g,n}(\boldsymbol{m}) \to \schottky_{g}$, where the fibers correspond to configuration spaces of $n$ labeled points with orders specified by the vector $\boldsymbol{m} = (m_1,\dots,m_n)$.\footnote{For details, see subsection 3.3 in \cite{Taghavi2024classical}.} Equivalently, with each $O \cong \singrigon/\Sigma$, we assign a point in the generalized Schottky space $\schottky_{g,n}(\boldsymbol{m})$.
	\section{Details of (co)Homology Double Complexes}\label{hocohomdet}
	In this appendix, we first review the association of double (co)homology complexes to a manifold $M$, which is quotient manifold $\tilde{\Omega}/\Gamma$, and $\Gamma$ acts properly discontinuously on it. Detailed explanations can be found in \cite{Aldrovandi_1997} and the references therein (also see \cite{Takhtajan_2003, park2017potentials, maclane2012homology}). Next, we adapt this framework to our specific scenario involving a 3-orbifold with lines of conical singularities and punctures where its conformal boundary is given by an orbifold Riemann surface.

	Utilizing the Uniformization theorem, a natural structure (function)  is induced on manifold $M$ from its covering space. Therefore, a good starting point for systematically defining that function (for example volume or Liouville action) is considering their planar covering spaces' homology and cohomology complexes. These complexes provide us with regions of integration and differential forms to define that function as a pairing between them.
	Specifically, the homology double complex $\compfontdii{K}$ is defined as the tensor product of the standard singular chain complex of $\tilde{\Omega}$ and the canonical bar-resolution complex for $\Gamma$, taken over the integral group ring $\mathbb{Z}\Gamma$.\footnote{The standard singular chain complex encodes topological information about space via its singular simplices, while the canonical bar-resolution complex offers algebraic data that reflects the structure of the group. Moreover,  $\mathbb{Z}\Gamma$ denotes the integral group ring associated with the group $\Gamma$, which consists of finite sums of the form $\sum_{\gamma\in\Gamma} n_{\gamma}\hspace{1mm}\gamma$, where each $n_{\gamma}$ is an integer. } 
	%\textcolor{red}{By analyzing this double complex, one can derive various homological invariants that are crucial for understanding the topology of spaces influenced by group actions.} 
	Meanwhile, the cohomology double complex  $\compfontuii{C}$ corresponds to the tensor product of bar-de Rham complex on
	$\tilde{\Omega}$ and group cohomology complex. Each of these concepts will be explained in more detail as follows.
	
	Let denote the standard singular chain complex of $\tilde{\Omega}$ and its differential, respectively, with $\compfontdi{S}\equiv \compfontdi{S}(\tilde{\Omega})$ and $\partial'$. The group $\Gamma$ acts on $\tilde{\Omega}$ by linear fractional transformations and induces a left action on $\compfontdi{S}$. Hence $\compfontdi{S}$ becomes a complex of  $\Gamma$-modules. Because the action of $\Gamma$  on $\tilde{\Omega}$ is proper,\footnote{An action of a group on a space is said to be proper if, for every compact subset of the space, only finitely many group elements map points within the subset to points outside it.} the complex $\compfontdi{S}$ can be viewed as a complex of left free $\mathbb{Z}\Gamma$-modules. Moreover, let us denote the canonical bar resolution complex for $\Gamma$ and its differential, respectively, by
	$\compfontdi{B}\equiv \compfontdi{B}(\mathbb{Z}\Gamma)$ and $\partial''$.  Each $\compfont{B}_{b}$ is a free left $\Gamma$-module on generators $[\gamma_1|...|\gamma_b]$, with the differential $\partial'' {:}\hspace{.5mm}\compfont{B}_b\rightarrow \compfont{B}_{b-1}$ given by
	\begin{equation}
	\begin{split}
	&\text{For}~ b=0:~~~\partial''[~] = 0,\\
	&\text{For}~ b=1:~~~\partial''[\gamma] = \gamma[~]-[~],\\
	&\text{For}~b>1:~~~\partial'' [\gamma_1|...|\gamma_b] = \gamma_{1} [\gamma_2|...|\gamma_b]+\sum_{j=1}^{b-1}(-1)^j [\gamma_1|...|\gamma_{j}\gamma_{j+1}|...|\gamma_b]+(-1)^b [\gamma_1|...|\gamma_{b-1}].\nonumber
	\end{split}
	\end{equation}
	If any of the group elements inside $[~]$ equals the unit element 1 in $\Gamma$ then $[\gamma_{1}|...|\gamma_{b}]$ is defined to be zero. Moreover, $\compfont{B}_{0}$ is a $\mathbb{Z}\Gamma$-module on one generator $[~]$, and can be identified with $\mathbb{Z}\Gamma$ under the isomorphism that sends $[~]$ to 1. Then, the double complex $\compfontdii{K}$ is defined as $\compfontdii{K}=\compfontdi{S}\otimes_{\mathbb{Z}\Gamma}\compfontdi{B}$. The associated total complex, denoted $\text{Tot}\compfont{K}$, is equipped with a total differential given by $\partial=\partial'+(-1)^a \partial''$ acting on $\compfont{K}_{a,b}$.
	It is important to note that since both $\compfontdi{S}$ and $\compfontdi{B}$ are complexes of left $\Gamma$-modules, we must provide each $\compfont{S}_a$ with a right $\Gamma$-module structure to properly define their tensor product over $\mathbb{Z}\Gamma$. This is achieved in the standard manner by setting the action as $c.\gamma= \gamma^{-1}(c)$ for any element $c$ in the module.
	%As a result $\compfont{S}\otimes_{\mathbb{Z}\Gamma}\compfont{B}= (\compfont{S}\otimes_{\mathbb{Z}}\compfont{B})_{\Gamma}$
	%, so that the tensor product over the integral group ring
	%of $\Gamma$ can be obtained as the set of $\Gamma$-invariants in the usual tensor product (over $\mathbb{Z}$) as abelian groups \cite{}. 
	In this way, $\compfontdi{S}$ is equivalent to $\compfontdi{S}\otimes_{\mathbb{Z}\Gamma}\compfont{B}_{0}$ under
	the correspondence $c\mapsto c\otimes[~]$.
	
	The statements above can be summarized as illustrated below.
	% It's necessary to add quiver.sty package.
	
	\[\begin{tikzcd}
	& \vdots & \vdots & \vdots \\
	0 & {\compfont{K}_{2,a}(\tilde{\Omega},\Gamma)} & {\compfont{K}_{1,a}(\tilde{\Omega},\Gamma)} & {\compfont{K}_{0,a}(\tilde{\Omega},\Gamma)} & 0 \\
	0 & {\compfont{K}_{2,a-1}(\tilde{\Omega},\Gamma)} & {\compfont{K}_{1,a-1}(\tilde{\Omega},\Gamma)} & {\compfont{K}_{0,a-1}(\tilde{\Omega},\Gamma)} & 0 \\
	& \vdots & \vdots & \vdots
	\arrow["{\partial''_{a+1}}", from=1-2, to=2-2]
	\arrow["{\partial''_{a+1}}", from=1-3, to=2-3]
	\arrow["{\partial''_{a+1}}", from=1-4, to=2-4]
	\arrow["i", hook, from=2-1, to=2-2]
	\arrow["{\partial'_2}", from=2-2, to=2-3]
	\arrow["{\partial''_{a}}", from=2-2, to=3-2]
	\arrow["{\partial'_1}", from=2-3, to=2-4]
	\arrow["{\partial''_{a}}", from=2-3, to=3-3]
	\arrow["{\partial'_0}", from=2-4, to=2-5]
	\arrow["{\partial''_{a}}", from=2-4, to=3-4]
	\arrow["i", hook, from=3-1, to=3-2]
	\arrow["{\partial'_2}", from=3-2, to=3-3]
	\arrow["{\partial''_{a-1}}", from=3-2, to=4-2]
	\arrow["{\partial'_1}", from=3-3, to=3-4]
	\arrow["{\partial''_{a-1}}", from=3-3, to=4-3]
	\arrow["{\partial'_0}", from=3-4, to=3-5]
	\arrow["{\partial''_{a-1}}", from=3-4, to=4-4]
	\end{tikzcd}\]
	
	\vspace{0.8em}
	In cohomology, the corresponding double complex is defined as follows. Let $\compfontui{A}\equiv\compfontui{A}_{\mathbb{C}}(\tilde{\Omega})$ represents the complexified de Rham complex on $\tilde{\Omega}$. Each $\compfont{A}^a$ is a left $\Gamma$-module with the pullback action of $\Gamma$, i.e., $\gamma.\hspace{.5mm}\tilde{\omega}=(\gamma^{-1})^{*}\tilde{\omega}$  for each $\tilde{\omega}\in \compfontui{A}$ and $\gamma\in \Gamma$. Define the double complex $\compfont{C}^{a,b}=\text{Hom}_{\mathbb{C}}(\compfont{B}_{b},\compfont{A}^a)$ with differential $\dd$, the usual de Rham differential, and $\delta=(\partial'')^{*}$, the group coboundary. For $\tilde{\omega}\in \compfont{C}^{a,b}$,
	\begin{equation}
	(\delta\tilde{\omega})_{\gamma_1,...,\gamma_{b+1}} = \gamma_1. \tilde{\omega}_{\gamma_1,...,\gamma_{b+1}}+\sum_{j=1}^{b}(-1)^{j} \tilde\omega_{\gamma_1,...,\gamma_{j}\gamma_{j+1},...,\gamma_{b+1}}+(-1)^{b+1}\tilde\omega_{\gamma_1,...,\gamma_b},\label{deltatildew}
	\end{equation}
	and the total differential on $\compfont{C}^{a,b}$ is denoted by $D=\dd+(-1)^{a}\delta$.
	
	These statements can also be summarized as illustrated below.
	
	\[\begin{tikzcd}
	& \vdots & \vdots & \vdots \\
	0 & {\compfont{C}^{2,a}(\tilde{\Omega},\Gamma)} & {\compfont{C}^{1,a}(\tilde{\Omega},\Gamma)} & {\compfont{C}^{0,a}(\tilde{\Omega},\Gamma)} & 0 \\
	0 & {\compfont{C}^{2,a-1}(\tilde{\Omega},\Gamma)} & {\compfont{C}^{1,a-1}(\tilde{\Omega},\Gamma)} & {\compfont{C}^{0,a-1}(\tilde{\Omega},\Gamma)} & 0 \\
	& \vdots & \vdots & \vdots
	\arrow["\delta", from=1-2, to=2-2]
	\arrow["\delta", from=1-3, to=2-3]
	\arrow["\delta", from=1-4, to=2-4]
	\arrow["i", from=2-1, to=2-2]
	\arrow["d", from=2-2, to=2-3]
	\arrow["\delta", from=2-2, to=3-2]
	\arrow["d", from=2-3, to=2-4]
	\arrow["\delta", from=2-3, to=3-3]
	\arrow["d", from=2-4, to=2-5]
	\arrow["\delta", from=2-4, to=3-4]
	\arrow["i", hook, from=3-1, to=3-2]
	\arrow["d", from=3-2, to=3-3]
	\arrow["\delta", from=3-2, to=4-2]
	\arrow["d", from=3-3, to=3-4]
	\arrow["\delta", from=3-3, to=4-3]
	\arrow["d", from=3-4, to=3-5]
	\arrow["\delta", from=3-4, to=4-4]
	\end{tikzcd}\]

	A natural pairing exists between $\compfont{C}^{a,b}$ and $\compfont{K}_{a,b}$ which assigns to the pair $(\tilde{\omega},c\hspace{.1mm}\otimes[\gamma_1|...|\gamma_{b}])$ the evolution of the $a$-form $\tilde{\omega}_{\gamma_1,...,\gamma_b}$ over the $a$-cycle $c$,
	\begin{equation}
	\stks{\tilde{\omega},c\otimes[\gamma_1|...|\gamma_b]} = \int_{c} \tilde{\omega}_{\gamma_1,...,\gamma_b}.\label{pairing}
	\end{equation}
	An important point to note is that for the singular homology of $\tilde{\Omega}$, the group homology of $\Gamma$, and the homology of the complex $(\text{Tot}~\compfont{K})_\bullet$, there are isomorphisms:
	\begin{equation}
	H_{\bullet}(M,\mathbb{Z}) \cong H_{\bullet}(\Gamma,\mathbb{Z})\cong H_{\bullet}((\text{Tot}~\compfont{K})_\bullet).\label{rep1}
	\end{equation}
	Similarly, for the de Rham cohomology of $\tilde{\Omega}$, the group cohomology of $\Gamma$, and the cohomology of the complex $(\text{Tot}~\compfont{C})^\bullet$, there are also isomorphisms:
	\begin{equation}
	H^{\bullet}(M,\mathbb{C}) \cong H^{\bullet}(\Gamma,\mathbb{C}) \cong H^{\bullet}((\text{Tot}~\compfont{C})^\bullet).\label{rep2}
	\end{equation}
	These isomorphisms imply that the pairing \eqref{pairing} is a non-degenerate pairing between the corresponding cohomology and homology groups $H^{\bullet}((\text{Tot}\compfont{C})^\bullet)$ and $H_{\bullet}((\text{Tot}\compfont{K})_\bullet)$, due to de‌ Rham theorem. Specifically, if $\tilde\omega$ is a cocycle in $(\text{Tot}\compfont{C})^{\bullet}$ and $c$ is a cycle in $(\text{Tot}\compfont{K})_\bullet$, the pairing $\stks{\tilde\omega,c}$ depends only on the cohomology class $[\tilde\omega]$ and the homology class $[c]$, not on their individual representatives. Accordingly, building the volume or Liouville action from the pairing \eqref{pairing} implies that these functions do not depend on the choice of a fundamental domain.
	
	The Stokes theorem, combined with the isomorphisms in equations \eqref{rep1} and \eqref{rep2}, implies that
	\begin{equation}
	\stks{\delta\tilde\omega,c}=\stks{\tilde\omega,\partial'' c},
	\end{equation}
	and, accordingly,
	\begin{equation}
	\stks{D\tilde\omega,c}=\stks{\tilde\omega,\partial c}.\label{Dpartial}
	\end{equation}
	\subsection{Homology in bulk: homology group and group homology}\label{homology}
	In this subsection, we delve into the specifics of the double homology complex of bulk extension of $\Omega$ in the presence of singularities (both conical and puncture points) within the context of Schottky uniformization. To aid in understanding, in the following, it's useful to visualize the case for genus  $g=2$ --- refer to Figures \ref{fig:1} and \ref{fig:2} for clarity.
	\begin{figure}[h]
		\centering
		\includegraphics[width=16em]{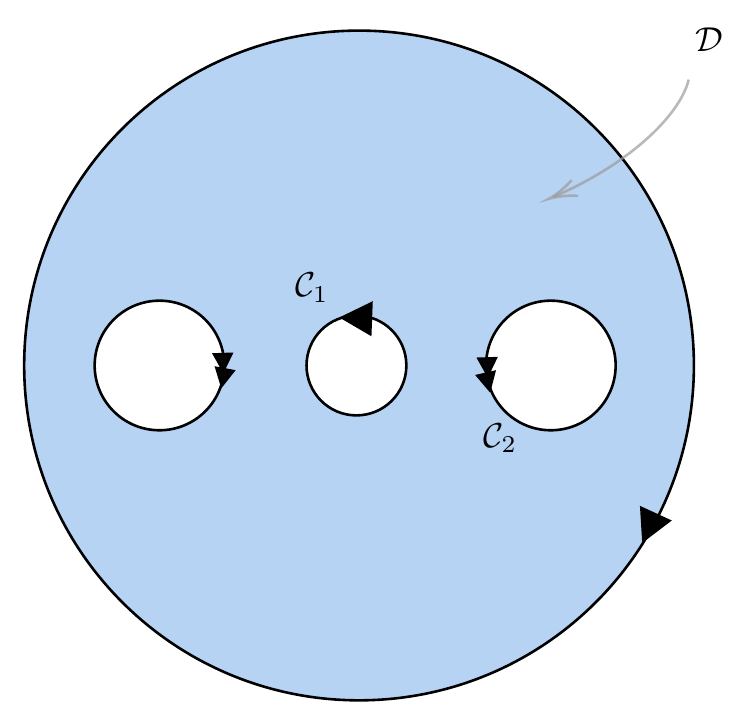}
		\caption{Fundametnal domain $\SchottkyFund$ of Schottky uniformization for $g=2$.}
		\label{fig:1}
	\end{figure}
	\begin{figure}[h]
		\centering
		\includegraphics[width=22em]{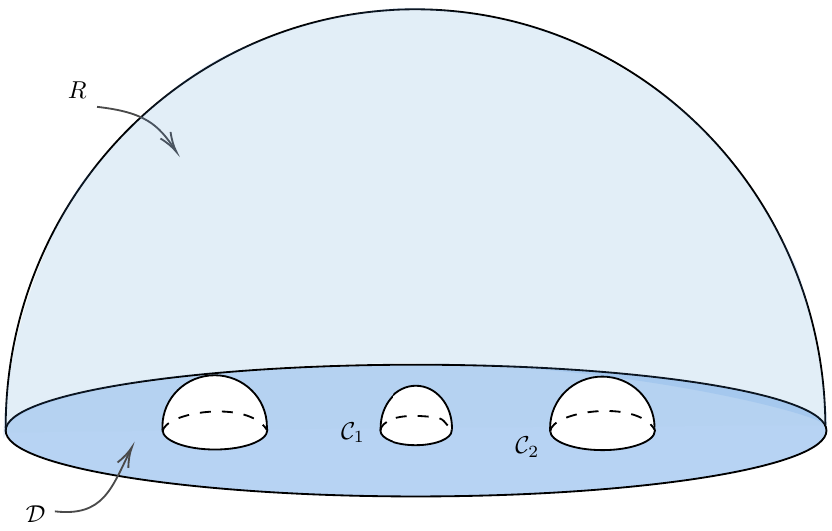}
		\caption{The three dimensional extension of fundamental domain $\SchottkyFund$ in Figure \ref{fig:1}.}
		\label{fig:2}
	\end{figure}
	
	First, let's review the homological construction in the absence of singularities. Consider a finitely generated purely loxodromic subgroup $\Sigma$ of $\PSLC$, namely a marked  Schottky group with a region of discontinuity $\Omega \subset\hat{\mathbb{C}}$ (see \cite{zograf1988uniformization,Taghavi2024classical} and Appendix \ref{Schotkkyreview} for more details). The compact Riemann surface $X =\Omega\slash \Sigma$ is the conformal boundary of the corresponding hyperbolic three-manifold  $M = (\mathbb{U}^3\cup \Omega)\slash\Sigma$, where $\mathbb{U}^3 = \{(z,r) \; | \; z\in \mathbb{C}, r>0
	\}$. Accordingly, the fundamental domain $\SchottkyFund$ can be extended to the upper-half space with a geodesic representative, as is shown in Figure \ref{fig:2}.
	
	Define $\tilde\Omega \equiv \mathbb{U}^3\cup \Omega $, and let $\compfontdi{S}=\compfontdi{S}({\tilde\Omega})$ and $\compfontdi{B}=\compfontdi{B}(\mathbb{Z}\Sigma)$ denote the singular chain complex of  $\tilde\Omega$ and standard bar resolution complex for $\Sigma$, respectively. The key idea is to construct a 3-chain that represents the fundamental region (the blue region in Figure \ref{fig:2}) in total complex $\text{Tot} \compfont{K}$ of the double homology complex $\compfontdii{K}= \compfontdi{S} \otimes_{\mathbb{Z}\Sigma} \compfontdi{B}$, as follows. Identify $R\subset \mathbb{U}^3$ with $R\otimes[\;] \in \compfont{K}_{3,0}$ as the fundamental region of $\Sigma$ in $\tilde\Omega$ such that $R\cap\Omega$ corresponds to the fundamental domain $\SchottkyFund$.
	
	We proceed to construct a staircase of homology elements and then contract them with suitable cohomology elements (made in the next subsection) to define the desired action. This process starts with noting $\partial'' R =0$, and
	
	\begin{equation}
	\partial' R = -\SchottkyFund- \sum_{i=1}^{g} \big(H_i -L_i(H_i)\big) = -\SchottkyFund + \partial'' S,\label{pprimeR}
	\end{equation}
	where $H_i$ is a topological hemisphere and $S \in \compfont{K}_{2,1}$ is defined by
	\begin{equation}
	S = \sum_{i=1}^{g} H_i \otimes [L_i^{-1}].\label{}
	\end{equation}
	We can proceed similarly and obtain
	\begin{equation}
	\begin{aligned}
	\partial'' S = -
	\sum_{i=1}^{g} \big(H_i\otimes [\;] - L_i(H_i) \otimes [\;]\big),
	\end{aligned}
	\end{equation}
	and
	\begin{equation}
	\begin{aligned}
	\partial'S = L,\label{pprimeS}
	\end{aligned}
	\end{equation}
	where $L= \sum_{i=1}^{g} \mathcal{C}_i \otimes [L_i^{-1}] \in \compfont{K}_{1,1}$ and $\mathcal{C}_i$ indicates Schottky circles shown in Figure \ref{fig:2}. Since the total complex $\text{Tot} \compfont{K}$ is equipped with the total differential $\partial= \partial'+(-1)^a \partial''$ on $\compfont{K}_{a,b}$, then the 3-chain $R-S \in (\text{Tot}\compfont{K})_3$ satisfies\footnote{It is straightforward to see that $\partial\hspace{.4mm}\Xi=0$ and if $\SchottkyFund$ and $\tilde{\SchottkyFund}$ are two choices of the fundamental domain for $\Sigma$ in $\Omega$, then $[\Xi]=[\tilde{\Xi}]$ for the corresponding classes in $H_2(\text{Tot}~\compfont{K})_2$.} (see equations \eqref{pprimeR} and \eqref{pprimeS})
	\begin{equation}
	\partial (R-S) = -\SchottkyFund -L\overset{\text{def}}{=}-\Xi.\label{RmS}
	\end{equation} 
	As discussed in Section~\ref{Renvolume}, it is important to note that defining the renormalized volume requires introducing a regularizing surface $f=\varepsilon$ near the conformal boundary. Consequently, the fundamental region $R$, should be modified according to $ R_\varepsilon = R\cap \{f\geq \varepsilon\}$. Additionally, as $r$ or $ \varepsilon$ approaches $0$, the cycle 
	$\Xi$ on $f=\varepsilon$
	serves as an extension of the fundamental domain $\SchottkyFund$ (see \cite{Aldrovandi_1997}), therefore $R-S$ remains a consistent extension of $R$.
	
	\begin{figure}[h]
		\centering
		\includegraphics[width=\textwidth]{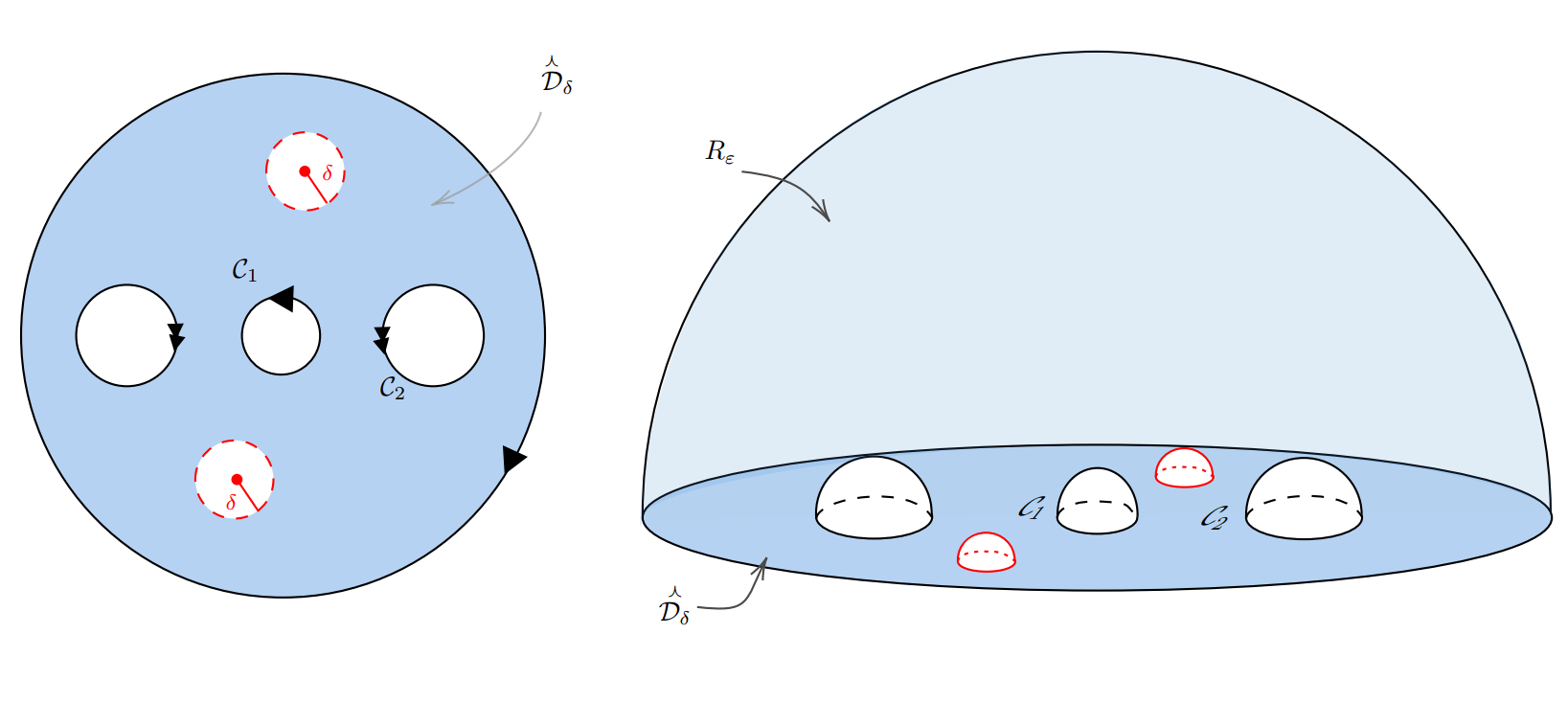}
		\caption{The three-dimensional extension of fundamental domain with defect points. The left figure is its fundamental domain $\singfund_{\delta}$, and the right figure shows the geodesic extension to $\mathbb{U}^3$.}
		\label{Puncturedfig}
	\end{figure}
	
	By adding punctures and conical points to the fundamental domain $\SchottkyFund$, we cut spheres around them, as shown in Figure \ref{Puncturedfig}. As demonstrated in Section~\ref{Renvolume} (see also \cite{park2017potentials}), the regularizing surface $f=\varepsilon$   intersects the singularities, therefore to have a proper level-defining function, we also need to remove the neighborhoods of each singularity at $(z_i,0)$ in $\mathbb{U}^3$.
	\begin{figure}[h]
		\centering
		\includegraphics[width=27em]{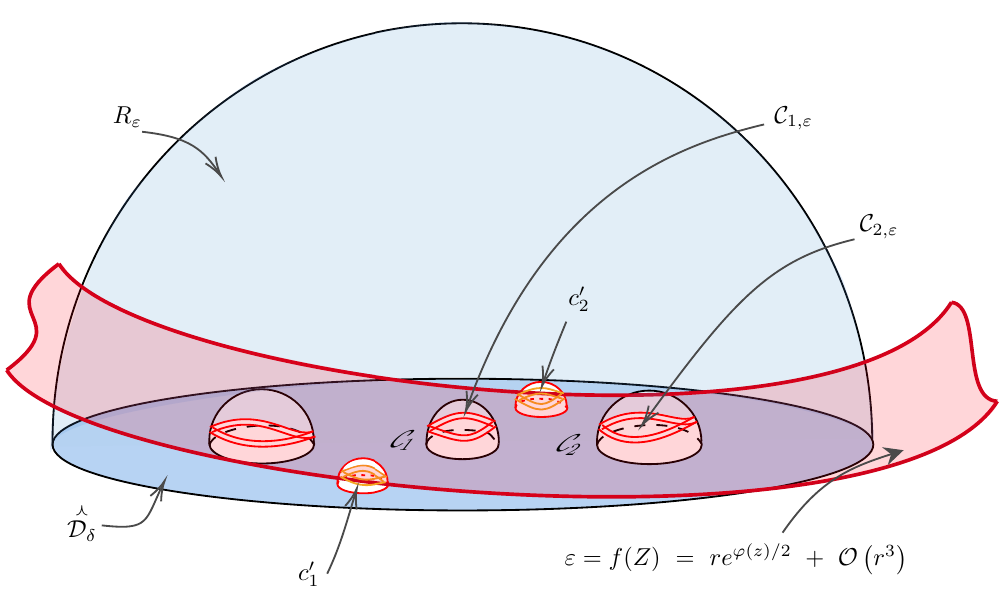}
		\caption{The regularizing surface $f(Z)$ which cuts through the fundamental region. }
		\label{fig:3}
	\end{figure}
	For $\varepsilon>0$, let
	\begin{equation}
	R_{\varepsilon} = R\cap\{f\geq\varepsilon\}\backslash \bigcup_{j=1}^{n}\big{\{}(z,r)\in\mathbb{U}^3\big| ~||(z,r)-(z_i,0)|| \leq \tilde{\varepsilon}\big{\}},
	\end{equation}
	be the regularized truncated fundamental region. For $H_{i,\varepsilon}= H_i\cap R_{\varepsilon}$, the equation \eqref{pprimeR} changes to
	\begin{equation}
	\partial' R_\varepsilon = -\singfund_{\varepsilon}- \sum_{i=1}^{g} \big(H_{i,\varepsilon} -L_i(H_{i,\varepsilon})\big) = -\singfund_{\varepsilon} + \partial'' S_{\varepsilon},\label{pprimeReps}
	\end{equation}
	with 
	\begin{equation}
	S_\varepsilon = \sum_{i=1}^{g} H_{i,\varepsilon} \otimes [L_i^{-1}].
	\end{equation}
	Furthermore,\footnote{Note that the $\mathcal{C}_{i,\varepsilon}$ represents the intersection of $H_{i,\varepsilon}$ with the regularizing surface $f = \varepsilon$.}  
	\begin{equation}
	\partial'S_\varepsilon  = \sum_{i=1}^{g} \mathcal{C}_{i,\varepsilon} \otimes [L_i^{-1}] = L,\label{pprimeSeps}
	\end{equation}
	and 
	\begin{equation}
	\partial' \singfund_{\varepsilon} 
	=\partial'' L  + \sum_{j=1}^{n} c'_j\otimes [\;],\label{parD}
	\end{equation}
	where $c_j'$s are the truncated circles around puncture and conical points shown in Figure \ref{fig:3}. Moreover, instead of \eqref{RmS}, by using equations \eqref{pprimeReps} and \eqref{pprimeSeps} one can see
	\begin{equation}
	\partial (R_{\varepsilon}-S_{\varepsilon}) = -\singfund_{\varepsilon}  -L\overset{\text{def}}{=}-\Xi_{\varepsilon}.\label{RmSeps}
	\end{equation}

	\subsection{Cohomology in bulk: cohomology group and group cohomology}\label{CohomologyU3}
	Let us now turn to the double cohomology complex of bulk extension of $\Omega$ in the presence of singularities ( punctures and conical points) within the context of Schottky uniformization. The cohomology construction starts with the volume form on this space, $\omega_3$,
	\begin{equation}
	\omega_3 = \frac{1}{r^3} ~\dd x\wedge \dd y\wedge \dd r= \frac{i}{2r^3}~\dd z\wedge \dd\bar{z}\wedge\dd r \in \compfont{C}^{3,0},
	\end{equation}
	It is an exact form since $\omega_3= \dd \omega_2$ with
	\begin{equation}
	\omega_2 = -\frac{i}{4r^2}~\dd z\wedge \dd\bar{z}\in \compfont{C}^{2,0}.\label{omega2}
	\end{equation}
	For the group element $\gamma = \begin{pmatrix}
	a & b \\
	c & d 
	\end{pmatrix}\in \Sigma\subset \PSLC$, a straightforward calculations using the equation \eqref{deltatildew} yields
	\begin{equation}
	\begin{split}
	\left(\delta\omega_2\right)_{\gamma^{-1}}&= \omega_2.\gamma-\omega_2 \\&
	=\frac{i}{2} J_{\gamma}(Z)\Bigg(\left|c\right|^2 \dd z\wedge\dd\bar{z}-\frac{c\overline{(c z+d)}}{r}\dd z\wedge\dd r+\frac{\bar{c}(cz+d)}{r}\dd\bar{z}\wedge\dd r\Bigg),
	\end{split}\label{deltaomega2}
	\end{equation}
	where $J_{\gamma}(Z)$ is defined in \eqref{Jgammaz}. It is important to note that in simplifying equation \eqref{deltaomega2}, we solely relied on the fact that $\text{det}(\gamma)=1$. Since 
	\begin{equation}
	\dd \delta\omega_2 =\delta \dd \omega_2 =\delta\omega_3,
	\end{equation}
	and $\omega_3$ is invariant under the  transformations by the group $\Sigma$, $\delta\omega_3=0$, this implies that there exists $\omega_1\in \compfont{C}^{1,1}$ such that $\delta\omega_2=\dd\omega_1$. Specifically,
	\begin{equation}
	\begin{aligned}
	\big(\omega_1\big)_{\gamma^{-1}} = -\frac{i}{8} \log \Big(
	|r~c(\gamma)|^2 J_\gamma(Z)
	\Big) \Big(
	\frac{\gamma''}{\gamma'}dz -  \frac{\overline{\gamma''}}{\overline{\gamma'}}d\overline{z}
	\Big),
	\end{aligned}
	\end{equation}
	where $c(\gamma)$ is the left-hand lower element in the matrix representation of the generator $\gamma$. Proceeding further to compute  $\delta\omega_1\in \compfont{C}^{1,2}$, we find that (see \cite{Takhtajan_2003})
	\begin{equation}
	\begin{split}
	\left(\delta\omega_1\right)_{\gamma_1^{-1},\gamma_{2}^{-1}} &= -\frac{i}{8}\log\left(J_{\gamma}(Z)~\frac{\left|c(\gamma_2)\right|^2}{\left|c(\gamma_2\gamma_1)\right|^2}\right)\left(\frac{\gamma_2''}{\gamma_2'}\circ\gamma_1\gamma_1' \dd z -\frac{\overline{\gamma_2''}}{\overline{\gamma_2'}} \circ\gamma_1\overline{\gamma_1'} \dd \bar{z}\right)\\&-\frac{i}{8}\log\left(J_{\gamma_2}(\gamma_1 Z)~\frac{\left|c(\gamma_2\gamma_1)\right|^2}{\left|c(\gamma_1)\right|^2}\right)\left(\frac{\overline{\gamma_1''}}{\overline{\gamma_1'}}~\dd \bar{z}-\frac{\gamma_1''}{\gamma_1'}~\dd z\right).\label{deltaomega1}
	\end{split}
	\end{equation}
	From the explicit calculations in equation \eqref{deltaomega1}, or simply by observing that
	\begin{equation}
	\dd \delta\omega_1 = \delta\dd\omega_1=\delta(\delta\omega_2)=0,
	\end{equation}
	we conclude that $\delta\omega_1$ is a closed. Therefore, there exists $\omega_0\in \compfont{C}^{0,2}$ such that $\delta\omega_1=\dd \omega_0$. Additionally, using $H^{3}({\tilde\Omega}, \Sigma)=0$, the antiderivative can be chosen such that $\delta\omega_0=0$. Finally, since the total complex $\text{Tot} \compfont{C}$ is equipped with the total differential $D= \dd+(-1)^a \delta$ on $\compfont{C}^{a,b}$, then for \begin{equation}
	\varpi= \omega_2-\omega_1-\omega_0\in (\text{Tot}\compfont{C})^2,\label{varpi}
	\end{equation}
	one gets
	\begin{equation}
	D\varpi = \omega_3.\label{Dvarpi}
	\end{equation}
	It is worth recalling that $\varpi$ represents an element of $H^{2}(X,\textcolor{black}{\singrigon})$, where $X$ is the regularizing surface $f=\varepsilon$, which, as $r\rightarrow 0$ , becomes the conformal boundary of the  orbifold $M$. Accordingly, in this process, we constructed the desired cohomology element, that its pairing with $\Xi_\varepsilon$, will give regularized Liouville action.
	\section{On existence and construction of regularizing surface $f(Z)=\varepsilon$}\label{regsurf}
	This appendix provides a detailed explanation of the construction of a well-defined regularizing surface, $f=\varepsilon$, within the region of discontinuity of the Schottky 3-orbifold $M$, accommodating cases both with and without conical singularities and punctures. The construction relies on a partition of unity for the group $\Gamma$ on  $\mathbb{U}^3\cup\Omega$, the region of discontinuity. A key step in this process is demonstrating the existence of a partition of unity, specifically in the region of discontinuity. This is achieved by first constructing the partition of unity on the orbifold itself. Then, if $\pi: \mathbb{U}^3\cup\Omega\rightarrow (\mathbb{U}^3\cup\Omega)\slash\Gamma$ denotes the natural projection map, by pulling back the partition of unity via $\pi^{-1}$, its existence on the region of discontinuity is established, as formalized in Lemma \ref{lemma} and demonstrated in figure \ref{fig:4}. Once the partition of unity on $\mathbb{U}^3\cup\Omega$ is established, it can be used to construct the $\Gamma$-automotphic regularizing surface  $f=\varepsilon$, which plays a pivotal role in the holographic description of (generalized) Liouville action in Section \ref{Renvolume}, based on the method outlined in \cite{Takhtajan_2003}.
	\begin{figure}[h]
		\centering
		\includegraphics[width=14em]{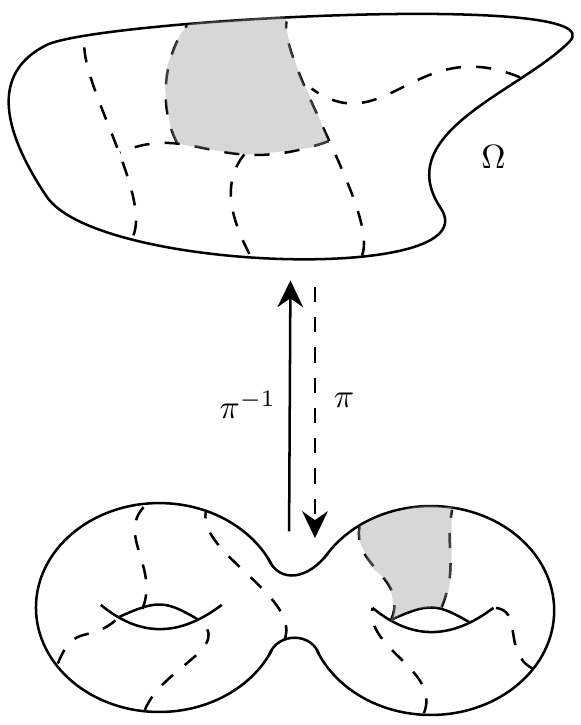}
		\caption{A sketch of pulling back partition of unity from 2-orbifold to $\Omega$ by inverse projection map $\pi^{-1}$. The same idea holds for 3D bulk.}
		\label{fig:4}
	\end{figure}
	
	Before presenting Lemma \ref{lemma}, we review essential definitions and facts about partitions of unity. The first step in constructing a partition of unity is associating a finite-indexed open cover to the orbifold. For a compact manifold the finite-indexed open cover is provided by definition. A topological space 
	$M$ is compact if every open cover of $M$ contains a finite subcover. That is, if  $M$ is covered by a collection of open sets $\{u_{\alpha}\}_{\alpha\in A}$
	, there exists a finite subset $\{u_{\alpha_1},u_{\alpha_2},...,u_{\alpha_n}\}$ that also covers $M$. In the presence of punctures, however, the orbifold is not compact. Instead, it is paracompact, 
	which allows for the construction of a locally finite open cover. 
	
	\textbf{Definition:}
	A topological space 
	$M$ is paracompact if every open cover has a locally finite refinement. 
	
	\textbf{Definition:} A collection of open sets  $\tilde{\mathcal{C}} = \{\tilde{u}_{\beta}\}_{\beta\in B}$  refines another collection $\mathcal{C} =  \{u_{\alpha}\}_{\alpha\in A}$ if each $\tilde{u}_{\beta}$ is contained within some $u_{\alpha}$. Alternatively, we say $\tilde{\mathcal{C}}$ is a refinement of $\mathcal{C}$. Also, the refinement is called locally finite if every point in $M$ has a neighborhood intersecting only finitely many sets $\tilde{u}_{\beta}$ in $ \tilde{\mathcal{C}}$.

	With the open covering specified, the existence of partition of unity is assured for compact manifold with respect to ${\mathcal{C}}$, and for paracompact orbifold with respect to $\tilde{\mathcal{C}}$.
	
	\noindent
	\textbf{Definition:}
	A partition of unity on a manifold $M$, subordinate to an open cover $\{u_\alpha\}_{\alpha \in A}$ consists of a family of smooth functions $\{\eta_{\alpha}: M\rightarrow [0,1]\}_{\alpha\in A}$ satisfying:
	\begin{enumerate}
		\item \textbf{Support on the cover}: Each $\eta_{\alpha}$  is supported within an open set $u_{\alpha}$. Specifically:
		\begin{equation}\nonumber
		\text{Supp}(\eta_{\alpha}) \subseteq u_{\alpha},
		\end{equation}
		where $0\leq\eta_{\alpha}\leq 1$.
		\item \textbf{Locally finite}: The family of supports is locally finite, which is to say, at any point $Z \in M$, only finitely many $\eta_{\alpha}(Z)$ are nonzero. 
		\item \textbf{Sum equals one}: The functions  $\eta_{\alpha}$ form a "partition" of unity, meaning:
		\begin{equation}\nonumber
		\sum_{\alpha\in A}\eta_{\alpha}(Z)=1,~~~\text{for all}~ Z\in M.
		\end{equation}
	\end{enumerate}
	The same notions and definitions apply to Schottky 3-manifolds with lines of conical singularities and punctures. So, by pulling back, its covering space, $\mathbb{U}^3 \cup\Omega$, is provided with a partition of unity (See Lemma 3.1 of \cite{kra1972}), which paves the way for defining a $\Gamma$-automorphic function on $\mathbb{U}^3 \cup\Omega$. 
	
	\begin{lemma}[I.kra \cite{kra1972}]\label{lemma}
		There exists a function $\eta\in C^{\infty}(\mathbb{U}^3\cup\Omega)$ satisfying:
		\begin{enumerate}
			\item $0\leq\eta\leq 1$,
			\item For each regular point $Z\in \mathbb{U}^3\cup\Omega$, there is a neighborhood $u$ of $Z$ and a finite subset $\tilde{\Gamma}\subset\Gamma$ such that $\eta|_{\gamma(u)}=0$ for each $\gamma\in \Gamma\backslash\tilde{\Gamma}$,
			\item $\sum_{\gamma\in\Gamma}\eta(\gamma Z)=1$,\hspace{.2cm}$Z\in\mathbb{U}^3\cup\Omega$.
		\end{enumerate}
		Additionally, the second property can be extended for puncture and conical points. If $R$ denotes a fundamental domain for  $\Gamma$ in $\mathbb{U}^{3}\cup\Omega$, then for each puncture and conical singularity on $(\mathbb{U}^{3}\cup\Omega)\slash\Gamma$, we can select regions, $\tilde{\mathcal{S}}_{\text{cusp}},\tilde{\mathcal{S}}_{\text{con}}$, within $R$ associated with each singularity.  These regions satisfy $\eta|_{\gamma (\tilde{\mathcal{S}}_{\text{cusp}})}=0$ for all $\gamma \in \Gamma \backslash\{1,\kappa\}$ and $\eta|_{\gamma (\tilde{\mathcal{S}}_{\text{con}})}=0$ for all $\gamma \in \Gamma \backslash\{1,\tau,\tau^2,...,\tau^{m-1}\}$, where $\kappa$ and $\tau$,  represents the parabolic and elliptic generators corresponding to the respective puncture and conical singularity. 
	\end{lemma}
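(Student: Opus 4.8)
The plan is to descend to the quotient $M=(\mathbb{U}^3\cup\Omega)/\Gamma$, where the desired object is essentially a classical partition of unity, and then to pull it back with an averaging normalization that forces property (3). Concretely, I would first fix the projection $\pi:\mathbb{U}^3\cup\Omega\to M$ and exploit paracompactness of $M$ (established in the discussion preceding the lemma) to obtain a locally finite cover $\{U_\alpha\}$ of $M$ by evenly covered coordinate charts, together with a subordinate partition of unity $\{\rho_\alpha\}$, $\sum_\alpha\rho_\alpha\equiv 1$. For each $\alpha$ I would select one sheet $\tilde U_\alpha\subset\pi^{-1}(U_\alpha)$ on which $\pi$ restricts to a diffeomorphism, lift $\rho_\alpha$ to $\tilde\rho_\alpha\in C^\infty_c(\tilde U_\alpha)$, and set $\eta=\sum_\alpha\tilde\rho_\alpha$. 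Property (1) is immediate from $0\le\rho_\alpha\le 1$ and $\eta(Z)=\sum_{\alpha:\,Z\in\tilde U_\alpha}\rho_\alpha(\pi(Z))\le 1$, while a direct computation at a regular point gives $\sum_{\gamma\in\Gamma}\eta(\gamma Z)=\sum_\alpha\rho_\alpha(\pi(Z))=1$, yielding property (3); if one prefers a more robust route, the same conclusion follows for any non-negative seed $g$ with locally finite, strictly positive orbit sum by passing to $\eta=g/\sum_{\gamma}g\circ\gamma$, since reindexing the orbit shows $\sum_{\gamma}\eta(\gamma Z)=1$ automatically.

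Property (2) — local finiteness of the orbit sum at a regular point — is where proper discontinuity of the $\Gamma$-action on the regular locus enters: a relatively compact neighborhood $u$ of a regular $Z$ meets only finitely many translates $\gamma(\operatorname{supp}\eta)$, so all but finitely many $\gamma$ give $\eta|_{\gamma(u)}=0$. I would make this precise by shrinking each $\tilde U_\alpha$ to be relatively compact and away from the limit set, and invoking that $\{\gamma:\ \gamma K\cap K\neq\emptyset\}$ is finite for every compact $K\subset\mathbb{U}^3\cup\Omega$ avoiding the limit set.

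The two remaining cases require adapting the local model to the nontrivial stabilizers. At a cone point, after conjugating so that the elliptic generator $\tau$ rotates by $2\pi/m$ about the fixed axis, I would take $\tilde{\mathcal{S}}_{\text{con}}\subset R$ to be a fundamental angular sector for $\langle\tau\rangle$ inside a small $\langle\tau\rangle$-invariant solid neighborhood $V$ of the axis, and build the local piece of $\eta$ supported in $V=\bigcup_{j=0}^{m-1}\tau^j\tilde{\mathcal{S}}_{\text{con}}$ with $\sum_{j=0}^{m-1}\chi\circ\tau^j\equiv 1$ on $V$; since the remaining copies $\gamma V$ with $\gamma\notin\langle\tau\rangle$ are disjoint from $V$, this yields $\eta|_{\gamma(\tilde{\mathcal{S}}_{\text{con}})}=0$ for $\gamma\notin\{1,\tau,\dots,\tau^{m-1}\}$, and the finite stabilizer sum restores property (3) there. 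The cusp is handled analogously by normalizing the parabolic generator $\kappa$ to a translation $z\mapsto z+\lambda$ with the cusp at $\infty$, choosing a horoball $B=\{r>r_0\}$ and letting $\tilde{\mathcal{S}}_{\text{cusp}}$ be a fundamental strip for $\langle\kappa\rangle$ in $B$, then arranging the local bump to overlap only its single $\kappa$-translate so that $\eta|_{\gamma(\tilde{\mathcal{S}}_{\text{cusp}})}=0$ unless $\gamma\in\{1,\kappa\}$.

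I expect the cusp to be the main obstacle. Unlike the cone point, the parabolic stabilizer $\langle\kappa\rangle$ is infinite and its orbit accumulates at the parabolic fixed point on the limit set, so proper discontinuity fails and property (2) cannot hold literally at the cusp. The key technical input I would need is that the horoball $B$ can be taken small enough (equivalently $r_0$ large) that its translates $\gamma B$ with $\gamma\notin\langle\kappa\rangle$ are disjoint from $B$ — a Shimizu/Margulis-type separation statement — so that inside $B$ the orbit sum collapses to the single cyclic sum $\sum_{n\in\mathbb{Z}}\chi\circ\kappa^n$, which is locally finite and equals $1$. Gluing the regular, cone, and cusp pieces together through the subordinate partition $\{\rho_\alpha\}$ while simultaneously preserving (1), (3) and both refined support conditions is the step that demands the most care.
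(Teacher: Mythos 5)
Your proposal is essentially sound and follows the same descend-and-pull-back strategy as the paper's proof: exploit paracompactness of the quotient, take a locally finite cover with a subordinate smooth partition of unity, lift each subordinate function to one chosen component of its preimage, extend by zero, and sum. The two arguments differ in how the singular locus is absorbed into this scheme. The paper never restricts to evenly covered charts: it allows every chart $u(Z)$ to have preimage components mapping $N(Z)$-to-one, with $N(Z)$ the order of the stabilizer of $\pi^{-1}(Z)$, imposes that the chosen component lie in $R\cup\kappa_i(R)$ near a cusp and in $R\cup\tau_i(R)\cup\dots\cup\tau_i^{m_i-1}(R)$ near a cone axis, and then repairs property (3) by the weighting $\eta=\sum_j\eta_j(Z)/N(x_j)$: the lift to a component meeting a cone axis is automatically $\langle\tau\rangle$-invariant, so the orbit sum over-counts by exactly $N(x_j)$ and the division cancels this. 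Your equivariant fundamental sectors and strips accomplish the same thing without any $1/N$ weights, and the gluing step you flag as the hardest is in fact already solved by the renormalization $\eta=g/\sum_\gamma g\circ\gamma$ that you mention in passing: it preserves supports (hence both refined support conditions) and forces property (3), provided the orbit sum of the seed is positive and locally finite. Your precisely invariant horoball plays exactly the role of the paper's requirement that the chosen component near a cusp lie in $R\cup\kappa_i(R)$; the paper justifies that step only by a brief appeal to Hausdorffness, so your Shimizu/Margulis-type input is, if anything, the more careful justification.

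One assertion you make is wrong, however, and should be removed: proper discontinuity does not fail anywhere on $\mathbb{U}^3\cup\Omega$. The parabolic fixed point lies in the limit set, hence outside $\mathbb{U}^3\cup\Omega$, and a discrete subgroup of $\PSLC$ acts properly discontinuously on all of $\mathbb{U}^3\cup\Omega$; consequently property (2) holds verbatim at every point of the cusp region, all of which are regular points of the space. What genuinely fails near a cusp is uniformity, not proper discontinuity: the region $\tilde{\mathcal{S}}_{\text{cusp}}$ has non-compact closure in $\mathbb{U}^3\cup\Omega$, since it accumulates at the parabolic fixed point, so pointwise finiteness does not by itself produce a single finite set $\{1,\kappa\}$ valid on the whole region. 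That uniformity is exactly the content of the refined statement, and it is why the cusp needs your horoball separation while the cone case does not: the closed elliptic axis, together with its endpoints in the region of discontinuity, is compact in $\mathbb{U}^3\cup\Omega$ and its stabilizer $\langle\tau\rangle$ is finite, so ordinary proper discontinuity suffices there. With that diagnosis corrected, your construction goes through.
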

	\begin{proof}
		The main idea is to pull back the partition of unity defined on the orbifold to the region of discontinuity in the covering space using $\pi^{-1}$ (as shown in figure \ref{fig:4}) and then use it to define $\eta$. Introducing an open cover begins with collecting a set of neighborhoods of each point. For each $Z\in (\mathbb{U}^{3}\cup\Omega)\slash\Gamma$, select a coordinate neighborhood $u(Z)$ of $Z$, chosen arbitrarily under the condition that the restriction of $\pi$ to each component of $\pi^{-1}(u(Z))$ is an $N(Z)$-to-one covering, where $N(Z)$ is the order of stability subgroup of $\pi^{-1}(Z)$.\footnote{The stability subgroup (also called the isotropy subgroup) of 
			$\pi^{-1}(Z)$ refers to the subgroup of $\Gamma$ that leaves the entire preimage 
			$\pi^{-1}(Z)$ invariant under the action of $\Gamma$.}
		
		In the presence of parabolic generators, additional considerations are necessary since the quotient space is non-compact. However, as noted earlier, this space (which we refer to as orbifold Riemann surface, or orbifold for short) is paracompact; that is, any open cover has a locally finite refinement by definition. Define puncture's neighborhoods as 
		$\mathcal{S}_{\text{cusp},i}$ (where $p_i\in \mathcal{S}_{\text{cusp},i}$) such that $\text{CL}~\mathcal{S}_{ \text{cusp},i}\subset\tilde{\mathcal{S}}_{\text{cusp},i}$, where CL denotes closure. If $\{p_1,...,p_n\}$ represents all punctures, define $\mathcal{S}_{ \text{cusp}}=\cup_{i}\mathcal{S}_{\text{cusp},i}$. For $Z \notin \text{CL}~\mathcal{S}_{\text{cusp}}$,\footnote{Note that $\text{CL}~\mathcal{S}=\cup_{i}\text{CL}~\mathcal{S}_i$.} require that $u(Z)\cap \text{CL}~\mathcal{S}_{\text{cusp}}=0$. If $Z\in \tilde{\mathcal{S}}_{\text{cusp},i}$ for some $i$ and $\pi^{-1}(Z)\in \text{CL}~R$, require that one component of $\pi^{-1}(u(Z))$ lies within $R\cup \kappa_i (R)$, where $\kappa_i$ is the parabolic generator corresponding to $p_i$. It is evident that such neighborhoods can be chosen since orbifold is a Hausdorff space. All these construction runs parallel for conical points; just replace
		$\mathcal{S}_{\text{cusp},i}$ with 
		$\mathcal{S}_{\text{con},i}$ and $R \cup \kappa_i(R)$ with
		$R \cup \tau_i(R) \cup \tau_i^2(R) \cup \dots \cup \tau_i^{m_i-1}(R)$. The procedure, actually, amounts to operating regular neighborhoods from singularities. With this setup, $\mathcal{U}=\{u(Z)| Z\in (\mathbb{U}^3\cup\Omega)\slash\Gamma\}$ forms an open cover and $(\mathbb{U}^3\cup\Omega)\slash\Gamma$ is  paracompact orbifold.
		
		Now, a locally finite subcover $\mathcal{U}_0= \{u(x_j), j=1,2,...\}$, can be chosen from the open covering $\mathcal{U}$. Let $\{\tilde{\eta}_j\}$ represent a smooth partition of unity subordinate to this subcover $\mathcal{U}_0$. Specifically, each $\tilde{\eta}_j$ is a smooth function on $(\mathbb{U}^3\cup\Omega)\slash\Gamma$ with support contained in $u(x_j)$ such that $\sum_{j}\tilde{\eta}_j(Z)=1$ for all $Z\in (\mathbb{U}^3\cup\Omega)\slash\Gamma$. For each $j$, select a single component $U(x_j)$ of $\pi^{-1}(u(x_j))$. If $x_j\in \tilde{\mathcal{S}}_{\text{cusp},i}$ for some $i$,
		it is required that $U(x_j) \subset R\cup\kappa_{i}(R)$. If $x_j\in \tilde{\mathcal{S}}_{\text{con},i}$ for some $i$,
		it is required that $U(x_j) \subset (R)\cup \tau_i(R) \cup \dots \cup \tau_i^{m_i-1}(R)$.

		The function $\tilde{\eta}_j$ is then mapped to the region of discontinuity as follows:
		\begin{equation}
		\eta_{j}(Z) = 	 \left\{
		\begin{split}
		&\tilde{\eta}_{j}(\pi(Z)), &\hspace{.5cm} Z\in U(x_j) ,\\
		&0  ~~~~~~, & Z\in (\mathbb{U}^3\cup\Omega)\backslash U(x_j).
		\end{split}
		\right.
		\end{equation} 
		It is evident that $\eta_{j}(Z)\in \mathcal{C}^{\infty}(\mathbb{U}^{3}\cup\Omega)$. Finally, a function satisfying all three conditions required for a partition of unity on the region of discontinuity is constructed as:
		\begin{equation}
		\eta(Z) = \sum_{j=1}^{\infty} \frac{\eta_{j}(Z)}{N(x_j)},~~~~Z\in \mathbb{U}^3\cup \Omega.
		\end{equation}
	\end{proof}
	Once the partition of unity, $\eta(Z)$, is established, the $\Gamma$-automorphic regularizing surface is defined as
	\begin{equation}
	f(Z) = \sum_{\gamma\in\Gamma}\eta(\gamma Z)\hat{f}(\gamma Z).\label{f}
	\end{equation}
	where
	\begin{equation}
	\hat{f} (Z) = 	 \left\{
	\begin{split}
	&re^{\varphi(z)/2}, &\hspace{.5cm} r\leq \varepsilon/2 ,\\
	&1  \qquad, & r\geq \varepsilon.
	\end{split}
	\right.\label{fhat}
	\end{equation}
	
	By property (2) of Lemma \ref{lemma}, for every $Z\in \mathbb{U}^3\cup\Omega$, the sum involves only finitely many nonzero terms, ensuring that the function $f$ is well-defined. Furthermore, properties (1) and (3) of Lemma \ref{lemma} guarantee that $f$ is positive on $\mathbb{U}^3\cup\Omega$. Note that the regularized action constructed with this $\Gamma$-automorphic regularizing surface is also invariant under $\Gamma$ action, which is why we employed it.
	
	We now determine the asymptotic behavior of  $f(Z)$ in \eqref{f}, as it is essential for computing the regularized bulk volume in Section \ref{Renvolume}. According to equation \eqref{adstr}, we find the following expressions for transformed coordinates under the action of $\gamma$,
	\begin{equation}
	\begin{aligned}
	& z(\gamma Z)=  \gamma(z) -\frac{1}{2} \gamma'(z)\hspace{1mm}\frac{\overline{\gamma''(z)}}{\overline{\gamma'(z)}}\hspace{1mm}r^2+ \mathcal{O}(r^4),\\
	&t(\gamma Z)=|\gamma'(z)|\hspace{1mm}r -\frac{1}{4}\frac{|\gamma''(z)|^2}{|\gamma'(z)|}\hspace{1mm}r^3 + \mathcal{O}(r^5).
	\end{aligned}\label{D.1}
	\end{equation}
	This leads to the transformation of $\varphi(\gamma(z))$ as follows:
	\begin{equation}
	\begin{aligned}
	\varphi(\gamma Z) &= \varphi\left(\gamma(z) -\frac{1}{2} \gamma'(z)\hspace{1mm}\frac{\overline{\gamma''(z)}}{\overline{\gamma'(z)}}\hspace{1mm}r^2+ \mathcal{O}(r^4)\right)
	\\
	& = \varphi(\gamma(z)) -\frac{1}{2} \gamma'(z)\hspace{1mm}\frac{\overline{\gamma''(z)}}{\overline{\gamma'(z)}}\hspace{1mm}r^2 \varphi'(\gamma(z)) + \mathcal{O}(r^4)\\
	&=\varphi(\gamma(z)) -\frac{1}{2} \frac{\overline{\gamma''(z)}}{\overline{\gamma'(z)}}\hspace{1mm}r^2 \frac{d}{dz}\left(\varphi(\gamma z)\right) + \mathcal{O}(r^4)\\
	&= \varphi(z) - \log |\gamma'(z)|^2 -\frac{1}{2} \frac{\overline{\gamma''(z)}}{\overline{\gamma'(z)}}\hspace{1mm} \left(\varphi'(z)-\frac{\gamma''(z)}{\gamma'(z)}\right)\hspace{1mm}r^2 + \mathcal{O}(r^4),
	\end{aligned}\label{D.2}
	\end{equation}
	where from the second line to the third line, the chain rule $
	\frac{1}{\gamma'(z)}\frac{d}{dz}\left(\varphi(\gamma z)\right) =  \varphi'(\gamma z)$
	and from the third line to the fourth line, the relation $\varphi\circ \gamma(z) = \varphi(z) - \log |\gamma'(z)|^2$ is used. For instance, according to \eqref{conicalasymp}, in the limit as  $z\rightarrow z_i$ near each conical singularity,  $\varphi(\gamma(z))$ has the following form:
	\begin{equation}
	\begin{aligned}
	&\varphi(\gamma(z))=-2\left(1-\frac{1}{m_i}\right) \log \left|\gamma(z)-\gamma(z_i)\right|+\log \frac{4\left|J_1^{(i)}(\gamma(z))\right|^{-\frac{2}{m_i}}}{m_i^2}+\smallO(1)\\
	&\hspace{1.28cm}=
	-2\left(1-\frac{1}{m_i}\right) \log \left|(z-z_i)\gamma'(z_i)\right|+\log \frac{4\left|J_1^{(i)}\gamma'(z)\right|^{-\frac{2}{m_i}}}{m_i^2}+\smallO(1)
	\\
	&\hspace{1.25cm}=\varphi(z)-\log\left|\gamma'(z)\right|^2.
	\end{aligned} 
	\end{equation}   
	The same conclusion can also be drawn near each puncture. Exponentiating \eqref{D.2}, we obtain:
	\begin{equation}
	e^{\varphi(\gamma Z)/2} = e^{\varphi(z)/2} \frac{1}{|\gamma'(z)|}\left(1-\frac{1}{4} \frac{\overline{\gamma''(z)}}{\overline{\gamma'(z)}}\hspace{1mm} \left(\varphi'(z)-\frac{\gamma''(z)}{\gamma'(z)}\right)\hspace{1mm}r^2 + \mathcal{O}(r^4)\right).\label{D.3}
	\end{equation}
	Consequently, from \eqref{D.1} and \eqref{D.3} for $\hat{f}$ in \eqref{fhat}, we find
	\begin{equation}
	\hat{f}(\gamma Z) 
	%=re^{\varphi(z)/2}\left(1 -\frac{1}{4}\frac{|\gamma''(z)|^2}{|\gamma'(z)|^2}\hspace{1mm}r^2 + \mathcal{O}(r^4)\right) \left(1-\frac{1}{4} \frac{\overline{\gamma''(z)}}{\overline{\gamma'(z)}}\hspace{1mm} \left(\varphi'(z)-\frac{\gamma''(z)}{\gamma'(z)}\right)\hspace{1mm}r^2 + \mathcal{O}(r^4)\right)
	=re^{\varphi(z)/2}-\frac{1}{2} \frac{\overline{\gamma''(z)}}{\overline{\gamma'(z)}}\hspace{1mm} e^{\varphi(z)/2}\varphi'(z)\hspace{1mm}r^3+\mathcal{O}(r^5).
	\end{equation}
	Finally, plugging this into $f(Z)$ in \eqref{f},
	one finds 
	\begin{equation}
	f(Z) = re^{\varphi(z)/2}-\frac{1}{2}e^{\varphi(z)/2}\varphi'(z)\left(\sum_{\gamma \in \Gamma} \eta(\gamma Z)\frac{\overline{\gamma''(z)}}{\overline{\gamma'(z)}}\hspace{1mm} \right)r^3+\mathcal{O}(r^5).\label{fZ}
	\end{equation} 
	Note that the $\mathcal{O}(r^3)$ term depends on $\gamma$ through coefficients $c$ and $d$, complicating the summation over $\gamma$. However, only the leading term contributes to the renormalized volume calculation, so this dependency is not problematic.
	
	Before concluding this appendix, we introduce an alternative regularization method for addressing conical singularities, which extends the regularization used to obtain \eqref{fcompactt}. In the presence of conical singularity (with branching number $m_i$) at $z=z_i$ on the conformal boundary, the three-dimensional bulk metric, near each singularity, changes to\footnote{For example, see \cite{Krasnov_2001} for more details.}
	\begin{equation}
	\begin{aligned}
	ds^2_{\text{\tiny Conical}} = \frac{dr^2}{r^2} + \frac{1}{r^2} \frac{a_i^2}{(z-z_i)^{1-a_i}(\bar{z} - \bar{z}_i)^{1-a_i}}dzd\bar{z}, ~~~~~~~a_i=1-1/m_i
	\end{aligned}
	\label{conicalmetric}
	\end{equation}
	A viable generalization of the set of transformations \eqref{asyadstr}, which preserves the form of metric \eqref{conicalmetric} near each conical singularity, could easily be found to be
	\begin{equation}
	\begin{aligned}
	z(\gamma Z) &= \gamma(z)+\mathcal{O}(r^2),\\\bar{z}(\gamma Z) &= \overline{\gamma(z)}+\mathcal{O}(r^2),\\
	r(\gamma Z) &= |\gamma'|^{a_i}~r+ \mathcal{O}(r^3).
	\end{aligned}\label{conasyadstr}
	\end{equation}
	%\begin{mdframed}
	%Let's check it. Under the \eqref{conasyadstr}, the metric \eqref{conicalmetric} transforms as
	%\begin{equation*}
	%\begin{aligned}
	%ds^2_{\text{\tiny Conical}}  \xrightarrow{\qquad}  \frac{1}{|\gammma'|^{2a}t^2} \Big(&
	%\frac{a^2t^2}{4} \frac{(\gammma''\overline{\gammma'})^2}{\gammma'} dz^2 + \frac{a^2t^2}{4} \frac{(\gammma'\overline{\gammma''})^2}{\overline{\gammma'}} d\bar{z}^2
	%+\frac{a^2t^2}{2} |\gammma''|^2 |\gammma'| dz d\bar{z} \\ & + a^2 t \gammma'' \overline{(\gammma')}^{\frac32} dzdt + t \overline{\gammma''} (\gammma')^{\frac32} d\bar{z}dt + |\gammma'|^{2a}dt^2
	%\Big) 
	%\end{aligned}
	%\end{equation*}
	%\begin{equation}
	%\begin{aligned}
	%+\frac{a^2}{t^2 |\gammma'|^{2a}} \Big(  
	%\frac{|\gammma'|^2 dz d\bar{z}}{(\underbrace{\gammma(z) - \gammma(z_i)}_{(z-z_i)\gammma'(z)})^{1-a} (\underbrace{\overline{\gammma(z)} - \overline{\gammma(z_i)}}_{(\overline{z-z_i}) \overline{\gammma'(z)}})^{1-a}}
	%\Big)
	%\end{aligned}
	%\end{equation}
	%One easily observes that the first term in parenthesis has $\smallO(t)$ and  $\smallO(t^2)$ contribution, unless for the last term, which survives in near boundary limit $t\to 0$. The $|\gammma'|$ factor in the second parenthesis cancels, and we recover the conical metric \eqref{conicalmetric} in the near boundary limit.
	%\end{mdframed}
	According to the asymptotic form of the field $\varphi$ near each singularity (see Appendix \ref{asymapp}), the automorphic function $f$ with respect to the transformation \eqref{conasyadstr} is given by
	\begin{equation}
	f(Z)= r~e^{\varphi(z)/2} +\mathcal{O}\left(r^3 \left(z-z_i\right)^{1/m_i -1}(\gamma'''(z_i)/3\gamma'(z_i)^{1-1/m_i})
	%((2m_i-3)/m_i^2 (J_{1}^{(i)})^{1/m_i}))
	\right)~~~~~ \text{as} ~~r\rightarrow 0.
	\end{equation}
	At leading order, which serves a key role in determining the renormalized volume, the result matches that of the alternative regularization in \eqref{fZ}.    
	% For the case of puncture, one can also see that the proper automorphic function $f$ becomes 
	% (see \cite{park2017potentials})
	% \begin{equation}
	% f(Z)=r~e^{\varphi(z)/2} +\mathcal{O}\bigg(r^3 \big((z-z_i)\log|z-z_i
	% %/J_{1}^{(i)}
	% |\big)^{-1}(\gamma'''(z_i)/3\gamma'(z_i))\bigg)~~~~~ \text{as} ~~r\rightarrow 0.
	% \end{equation}
	\bibliographystyle{jhep.bst}
	\bibliography{references}

\end{document}